\newcommand{\ConvexShape}{\bigtriangledown}
\newcommand{\kTD}[2]{$#1$\text{-}TD#2}
\newcommand{\kDT}[2]{$#1$\text{-}DG#2}
\newcommand{\kGG}[2]{$#1$\text{-}GG#2}
\newcommand{\kRNG}[2]{$#1$\text{-}RNG#2}
\newcommand{\Tm}{t_{min}}
\newcommand{\WS}[1]{\text{WS$(#1)$}}
\title{Higher-Order Triangular-Distance Delaunay Graphs: Graph-Theoretical Properties}
\author{
Ahmad Biniaz\thanks{School of Computer Science, Carleton University, 
                    Ottawa, Canada. Research supported by NSERC.}
\and 
Anil Maheshwari\footnotemark[1]
\and 
Michiel Smid\footnotemark[1]
}
\date{\today}
\newtheorem{lemma}{Lemma}
\newtheorem{corollary}{Corollary}
\newtheorem{theorem}{Theorem}
\newtheorem{observation}{Observation}
\begin{document}

\maketitle

\begin{abstract}
We consider an extension of the triangular-distance Delaunay graphs (TD-Delaunay) on a set $P$ of points in the plane. In TD-Delaunay, the convex distance is defined by a fixed-oriented equilateral triangle $\ConvexShape$, and there is an edge between two points in $P$ if and only if there is an empty homothet of $\ConvexShape$ having the two points on its boundary. We consider higher-order triangular-distance Delaunay graphs, namely \kTD{k}{}, which contains an edge between two points if the interior of the homothet of $\ConvexShape$ having the two points on its boundary contains at most $k$ points of $P$. We consider the connectivity, Hamiltonicity and perfect-matching admissibility of \kTD{k}{}. Finally we consider the problem of blocking the edges of \kTD{k}{}.
\end{abstract}

\section{Introduction}
The {\em triangular-distance Delaunay graph} of a point set $P$ in the plane, TD-Delaunay for short, was introduced by Chew \cite{Chew1989}. A TD-Delaunay is a graph whose convex distance function is defined by a fixed-oriented equilateral triangle. Let $\ConvexShape$ be a downward equilateral triangle whose barycenter
is the origin and one of its vertices is on negative $y$-axis. A {\em homothet} of $\ConvexShape$ is obtained by scaling $\ConvexShape$ with respect to the origin by some factor $\mu\ge 0$, followed by a translation to a point $b$ in the plane: $b+\mu\ConvexShape=\{b+\mu a:a\in\ConvexShape\}$.
In the TD-Delaunay graph of $P$, there is a straight-line edge between two points $p$ and $q$ if and only if there exists a homothet of $\ConvexShape$ having $p$ and $q$ on its boundary and whose interior does not contain any point of $P$. In other words, $(p,q)$ is an edge of TD-Delaunay graph if and only if there exists an empty downward equilateral triangle having $p$ and $q$ on its boundary. In this case, we say that the edge $(p,q)$ has the {\em empty triangle property}. 
The TD-Delaunay graph is a planar graph, see \cite{Bose2010}.
We define $t(p,q)$ as the smallest homothet of $\ConvexShape$ having $p$ and $q$ on its boundary. See Figure~\ref{TD}(a). Note that $t(p,q)$ has one of $p$ and $q$ at a vertex, and the other one on the opposite side. Thus,

\begin{observation}
\label{side-point-obs}
 Each side of $t(p,q)$ contains either $p$ or $q$.
\end{observation}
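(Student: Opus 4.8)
The plan is to make the description of the smallest homothet $t(p,q)$ completely explicit through the three supporting lines of $\ConvexShape$, and then read the conclusion straight off. Since $\ConvexShape$ is a fixed-oriented equilateral triangle, each of its three sides has a fixed outward normal direction, and these directions are $120^\circ$ apart. I would encode this by three linear functionals $f_1,f_2,f_3$, one per side, each measuring the signed extent in the corresponding outward normal direction, normalized so that $f_1+f_2+f_3\equiv 0$; this identity is exactly the algebraic signature of the equilateral case. Every homothet of $\ConvexShape$ is then the region $\{x : f_i(x)\le c_i,\ i=1,2,3\}$ for suitable constants $c_i$, its size being a positive multiple of $c_1+c_2+c_3$, and side $i$ of this homothet is the locus $f_i=c_i$.

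The key step is to identify $t(p,q)$ with the choice $c_i=\max\{f_i(p),f_i(q)\}$. Any homothet containing both $p$ and $q$ must satisfy $c_i\ge f_i(p)$ and $c_i\ge f_i(q)$, hence $c_i\ge\max\{f_i(p),f_i(q)\}$; moreover $\sum_i\max\{f_i(p),f_i(q)\}\ge\sum_i f_i(p)=0$, so the componentwise-minimal choice really does give a nondegenerate triangle, and it minimizes the size $c_1+c_2+c_3$ among all homothets containing $p$ and $q$. Thus it equals $t(p,q)$. With this choice, for each index $i$ the value $c_i$ is attained at $p$ or at $q$, which says precisely that side $i$ contains $p$ or contains $q$ — the claimed statement, provided both points really lie on the boundary.

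To confirm boundary membership I would invoke $f_1+f_2+f_3\equiv 0$: the three differences $f_i(p)-f_i(q)$ sum to zero, so they cannot all be negative. Hence some index has $f_i(p)\ge f_i(q)$, forcing $c_i=f_i(p)$ and placing $p$ on side $i$; symmetrically some side contains $q$. Tracking the signs of these differences also recovers the remark preceding the statement: when one point realizes two of the three maxima it sits at the vertex shared by those two sides, while the other realizes the remaining maximum and lies on the opposite side.

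I expect the only delicate point to be the degenerate alignments in which one difference $f_i(p)-f_i(q)$ vanishes, so that both points land on a common side. I would dispose of these by noting that the target conclusion is insensitive to such ties: a vanishing difference simply means that one side contains both $p$ and $q$ at once, which only strengthens the covering of the three sides. A purely geometric alternative would start from an arbitrary homothet with $p,q$ on its boundary and shrink it maximally, then argue the extremal triangle pins a point at a vertex; but formalizing ``shrink maximally'' is essentially the optimization above, so I would prefer the functional argument.
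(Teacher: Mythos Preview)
Your argument is correct and considerably more formal than what the paper offers. The paper does not prove the observation at all: it simply notes, in the sentence immediately preceding the statement, that ``$t(p,q)$ has one of $p$ and $q$ at a vertex, and the other one on the opposite side,'' and then records the observation as an immediate consequence. In effect the paper treats the claim as visually evident from the definition of $t(p,q)$ and from a glance at Figure~\ref{TD}(a).

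Your route is genuinely different: you parametrize all homothets of $\ConvexShape$ by three half-plane constraints $f_i(x)\le c_i$ with $f_1+f_2+f_3\equiv 0$, identify $t(p,q)$ explicitly as the componentwise-minimal choice $c_i=\max\{f_i(p),f_i(q)\}$, and then read off that each side's supporting functional is tight at $p$ or at $q$. What this buys you is a self-contained algebraic proof that does not rely on a picture, and as a bonus it recovers the paper's vertex/opposite-side remark by tracking which point attains two of the three maxima. What the paper's bare assertion buys is brevity: in a paper that repeatedly uses $t(p,q)$ only qualitatively, a one-line geometric note suffices. Your handling of the degenerate case (a vanishing difference $f_i(p)-f_i(q)$) is also sound and, again, more careful than the paper, which tacitly assumes general position throughout.
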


In \cite{Babu2013}, the authors proved a tight lower bound of $\lceil\frac{n-1}{3}\rceil$ on the size of a maximum matching in a TD-Delaunay graph. In this paper we study higher-order TD-Delaunay graphs. An {\em order-k TD-Delaunay graph} of a point set $P$, denoted by \kTD{k}{}, is a geometric graph which has an edge $(p,q)$ iff the interior of $t(p,q)$ contains at most $k$ points of $P$; see Figure~\ref{TD}(b). The standard TD-Delaunay graph corresponds to \kTD{0}{}. We consider graph-theoretic properties of higher-order TD-Delaunay graphs, such as connectivity, Hamiltonicity, and perfect-matching admissibility. We also consider the problem of blocking TD-Delaunay graphs.

\begin{figure}[htb]
  \centering
\setlength{\tabcolsep}{0in}
  $\begin{tabular}{ccc}
\multicolumn{1}{m{.33\columnwidth}}{\centering\includegraphics[width=.3\columnwidth]{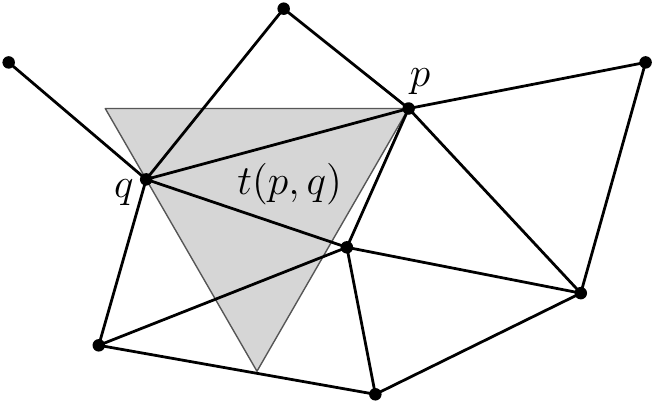}}
&\multicolumn{1}{m{.33\columnwidth}}{\centering\includegraphics[width=.3\columnwidth]{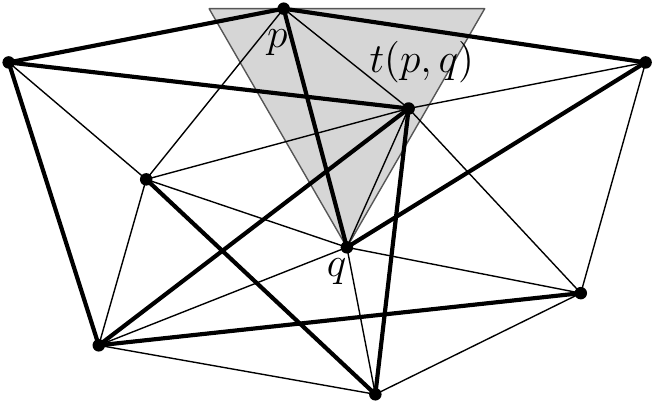}} &\multicolumn{1}{m{.33\columnwidth}}{\centering\includegraphics[width=.3\columnwidth]{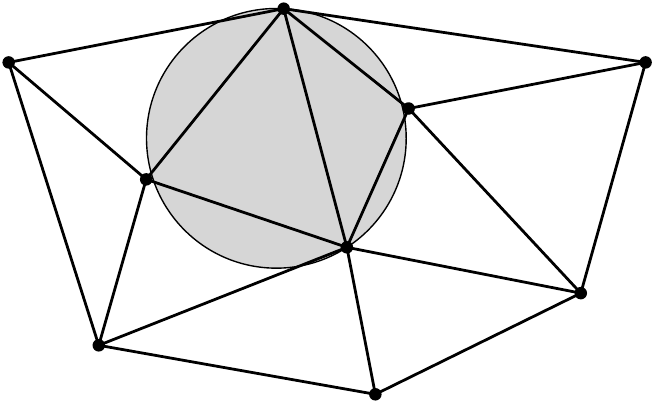}}
\\
(a) & (b)& (c)
\end{tabular}$
  \caption{(a) Triangular-distance Delaunay graph (\kTD{0}{}), (b) \kTD{1}{} graph, the light edges belong to \kTD{0}{} as well, and (c) Delaunay triangulation.}
\label{TD}
\end{figure}

\subsection{Previous Work}
\label{previous-work}
A {\em Delaunay triangulation} (DT) of $P$ is a graph whose distance function is defined by a fixed circle {\footnotesize $\bigcirc$} centered at the origin. DT has an edge between two points $p$ and $q$ if there exists a homothet of {\footnotesize $\bigcirc$} having $p$ and $q$ on its boundary and whose interior does not contain any point of $P$; see Figure~\ref{TD}(c). In this case the edge $(p,q)$ is said to have the {\em empty circle property}. An {\em order-k
Delaunay Graph} on $P$, denoted by \kDT{k}{}, is defined to have an edge $(p, q)$ iff there exists a homothet of {\footnotesize $\bigcirc$} having $p$ and $q$ on its boundary and whose interior contains at most $k$ points of $P$. The standard Delaunay triangulation corresponds to \kDT{0}{}.  

For each pair of points $p,q\in P$ let $D[p,q]$ be the closed disk having $pq$ as diameter. A {\em Gabriel Graph} on $P$ is a geometric graph which has an edge between two points $p$ and $q$ iff $D[p,q]$ does not contain any point of $P\setminus\{p,q\}$. An {\em order-$k$ Gabriel Graph} on $P$, denoted by \kGG{k}{}, is defined to have an edge $(p,q)$ iff $D[p,q]$ contains at most $k$ points of $P\setminus\{p,q\}$.

For each pair of points $p,q\in P$, let $L(p,q)$ be the intersection of the two open disks with radius $|pq|$ centered at $p$ and $q$. A {\em Relative Neighborhood Graph} on $P$ is a geometric graph which has an edge between two points $p$ and $q$ iff $L(p,q)$ does not contain any point of $P$. An {\em order-$k$ Relative Neighborhood Graph} on $P$, denoted by \kRNG{k}{}, is defined to have an edge $(p,q)$ iff $L(p,q)$ contains at most $k$ points of $P$. It is obvious that $\text{\kRNG{k}{}}\subseteq\text{\kGG{k}{}}\subseteq\text{\kDT{k}{}}$.

The problem of determining whether an order-$k$ geometric graph always has a (bottleneck) perfect matching or a (bottleneck) Hamiltonian cycle is quite of interest. We will define these notions in Section~\ref{graph-notions}. 
Chang et al. \cite{Chang1992b, Chang1992, Chang1991} proved that a Euclidean bottleneck biconnected spanning graph, bottleneck perfect matching, and bottleneck Hamiltonian cycle of $P$ are contained in \kRNG{1}{}, \kRNG{16}{}, \kRNG{19}{}, respectively. This implies that \kRNG{16}{} has a perfect matching and \kRNG{19}{} is Hamiltonian. Since \kRNG{k}{} is a subgraph of \kGG{k}{}, the same results hold for \kGG{16}{} and \kGG{19}{}. It is known that \kGG{k}{} is $(k+1)$-connected \cite{Bose2013} and 
\kGG{15}{} (and hence \kDT{15}{}) is Hamiltonian. Dillencourt showed that a Delaunay triangulation (\kDT{0}{}) admits a perfect matching \cite{Dillencourt1990} but it can fail to be Hamiltonian \cite{Dillencourt1987a}. 

Given a geometric graph $G(P)$ on a set $P$ of $n$ points, we say that a set $K$ of points blocks $G(P)$ if in $G(P\cup K)$ there is no edge connecting two points in $P$. Actually $P$ is an independent set in $G(P\cup K)$.
Aichholzer et al.~\cite{Aichholzer2013} considered the problem of blocking the Delaunay triangulation (i.e. \kDT{0}{}) for $P$ in general position. They show that $\frac{3n}{2}$ points are sufficient to block DT($P$) and at least $n-1$ points are necessary. To block a Gabriel graph, $n-1$ points are sufficient \cite{Aronov2013}.

In a companion paper, we considered the matching and blocking problems in higher-order Gabriel graphs. We showed that \kGG{10}{} contains a Euclidean bottleneck matching and \kGG{8}{} may not have any. As for maximum matching, we proved a tight lower bound of $\frac{n-1}{4}$ in \kGG{0}{}. We also showed that \kGG{1}{} has a matching of size at least $\frac{2(n-1)}{5}$ and \kGG{2}{} has a perfect matching (when $n$ is even). In addition, we showed that $\lceil\frac{n-1}{3}\rceil$ points are necessary to block \kTD{0}{} and this bound is tight.
\subsection{Our Results}
\label{our-results}
We show for which values of $k$, \kTD{k}{} contains a bottleneck biconnected spanning graph, bottleneck Hamiltonian cycle, and (bottleneck) perfect-matching. We define these notions Section~\ref{graph-notions}. In Section~\ref{connectivity} we prove that every \kTD{k}{} graph is $(k+1)$-connected. In addition we show that a bottleneck biconnected spanning graph of $P$ is contained in \kTD{1}{}. Using a similar approach as in \cite{Abellanas2009, Chang1991}, in Section~\ref{Hamiltonicity} we show that a bottleneck Hamiltonian cycle of $P$ is contained in \kTD{7}{}. We also show a configuration of a point set $P$ such that \kTD{5}{} fails to have a bottleneck Hamiltonian cycle. In Section~\ref{matching} we prove that a bottleneck perfect matching of $P$ is contained in \kTD{6}{}, and we show that for some point set $P$, \kTD{5}{} does not have a bottleneck perfect matching. In Section~\ref{matching2} we prove that \kTD{2}{} has a perfect matching and \kTD{1}{} has a matching of size at least $\frac{2(n-1)}{5}$. In Section~\ref{blocking-section} we consider the problem of blocking \kTD{k}{}. We show that at least $\lceil\frac{n-1}{2}\rceil$ points are necessary and $n-1$ points are sufficient to block a \kTD{0}{}. The open problems and concluding remarks are presented in Section~\ref{conclusion}.

\section{Preliminaries}
\label{preliminaries}
\subsection{Some Geometric Notions}
\label{geometric-notions}
Bonichon et al. \cite{Bonichon2010} showed that a half-$\Theta_6$ graph of a point set $P$ in the plane is equal to a TD-Delaunay graph of $P$. They also showed that every plane triangulation is TD-Delaunay realizable. 

\begin{wrapfigure}{R}{0.35\textwidth}
  \begin{center}
\includegraphics[width=.3\textwidth]{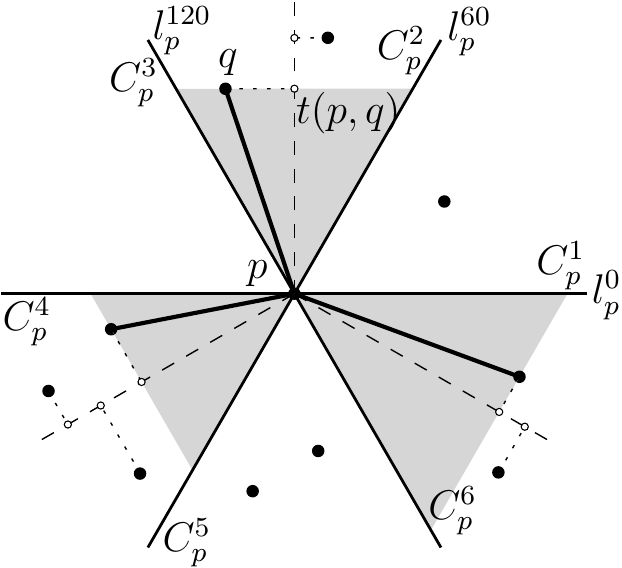}
  \end{center}
  \caption{The construction of the TD-Delaunay graph.}
\label{cones}
\end{wrapfigure}

A half-$\Theta_6$ graph (or equivalently a TD-Delaunay graph) on a point set $P$ can be constructed in the following way. For each point $p$ in $P$, let $l_p$ be the horizontal line through $p$. Define $l_p^{\gamma}$ as the line obtained by rotating $l_p$ by $\gamma$-degrees in counter-clockwise direction around $p$. Actually $l_p^0=l_p$. Consider three lines $l_p^{0}$, $l_p^{60}$, and $l_p^{120}$ which partition the plane into six disjoint cones with apex $p$. Let $C_p^1, \dots, C_p^6$ be the cones in counter-clockwise order around $p$ as shown in Figure~\ref{cones}. We partition the cones into the set of {\em odd cones} $\{C_p^1,C_p^3,C_p^5\}$, and the set of {\em even cones} $\{C_p^2,C_p^4,C_p^6\}$. For each even cone $C_p^i$ connect $p$ to the ``nearest'' point $q$ in $C_p^i$. The {\em distance} between $p$ and $q$, $d(p,q)$, is defined as the Euclidean distance between $p$ and the orthogonal projection of $q$ onto the bisector of $C_p^i$. See Figure~\ref{cones}. The resulting graph is the half-$\Theta_6$ graph which is defined by even cones \cite{Bonichon2010}. Moreover, the resulting graph is the TD-Delaunay graph defined with respect to homothets of $\ConvexShape$. By considering the odd cones, another half-$\Theta_6$ graph is obtained. The well-known $\Theta_6$ graph is the union of half-$\Theta_6$ graphs defined by odd and even cones. To construct \kTD{k}{}, for each point $p\in P$ we connect $p$ to its $(k+1)$ nearest neighbors in each even cone around $p$. It is obvious that \kTD{k}{} has $O(kn)$ edges. The \kTD{k}{} can be constructed in $O(n\log n+kn\log \log n)$-time, using the algorithm introduced by Lukovszki~\cite{Lukovszki1999} for computing fault tolerant spanners.

Recall that $t(p,q)$ is the smallest homothet of $\ConvexShape$ having $p$ and $q$ on its boundary. In other words, $t(p,q)$ is the smallest downward equilateral triangle through $p$ and $q$. Similarly we define $t'(p,q)$ as the smallest upward equilateral triangle having $p$ and $q$ on its boundary. It is obvious that the even cones correspond to downward triangles and odd cones correspond to upward triangles.   
We define an order on the equilateral triangles: for each two equilateral triangles $t_1$ and $t_2$ we say that $t_1<t_2$ if the area of $t_1$ is less than the area of $t_2$. Since the area of $t(p,q)$ is directly related to $d(p,q)$, 
$$d(p,q)<d(r,s) \quad\text{ if and only if }\quad t(p,q)<t(r,s).$$

\begin{figure}[htb]
  \centering
\setlength{\tabcolsep}{0in}
  $\begin{tabular}{ccc}
\multicolumn{1}{m{.33\columnwidth}}{\centering\includegraphics[width=.2\columnwidth]{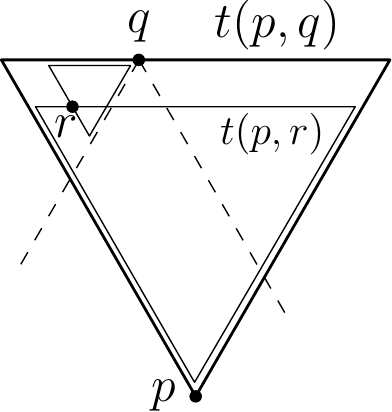}}
&\multicolumn{1}{m{.33\columnwidth}}{\centering\includegraphics[width=.2\columnwidth]{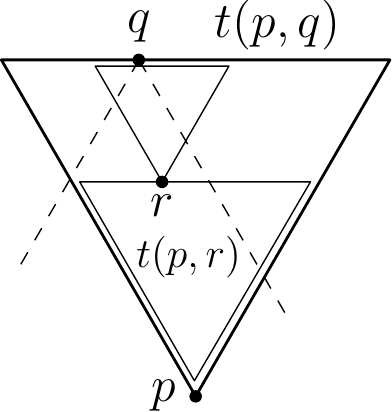}} &\multicolumn{1}{m{.33\columnwidth}}{\centering\includegraphics[width=.2\columnwidth]{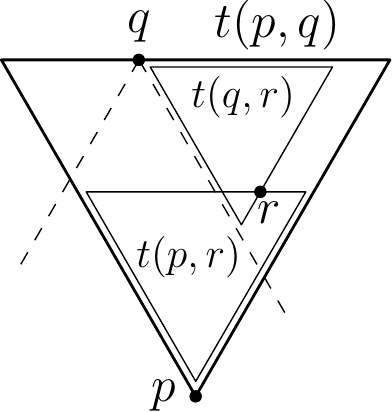}}
\\
(a) & (b)& (c)
\end{tabular}$
  \caption{Illustration of Observation~\ref{obs1}: the point $r$ is contained in $t(p,q)$. The triangles $t(p,r)$ and $t(q,r)$ are inside $t(p,q)$.}
\label{smaller-triangle-fig}
\end{figure}

As shown in Figure~\ref{smaller-triangle-fig} we have the following observation:

\begin{observation}
\label{obs1}
 If $t(p,q)$ contains a point $r$, then $t(p,r)$ and $t(q,r)$ are contained in $t(p,q)$.
\end{observation}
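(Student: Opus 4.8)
The plan is to exploit the fact that every homothet of $\ConvexShape$ is an intersection of three half-planes whose bounding lines have three \emph{fixed} directions, namely the directions of the three sides of $\ConvexShape$. Concretely, I would fix three linear functionals $\ell_1,\ell_2,\ell_3$, one orthogonal to each side of $\ConvexShape$ and oriented outward, so that every downward equilateral triangle can be written as $T=\{x:\ell_i(x)\le a_i \text{ for } i=1,2,3\}$ for suitable reals $a_1,a_2,a_3$, where $a_i$ records the position of the side perpendicular to $\ell_i$. Two such triangles then satisfy $T'\subseteq T$ as soon as their parameters obey $a_i'\le a_i$ for every $i$, since each defining half-plane of $T'$ is contained in the corresponding half-plane of $T$. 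This monotone, coordinate-wise description of containment is the engine of the whole argument.

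Next I would identify $t(p,q)$ within this description. The smallest homothet of $\ConvexShape$ containing a finite set $S$ is obtained by pushing each of the three sides inward as far as possible, i.e.\ by taking $a_i=\max_{s\in S}\ell_i(s)$; this is exactly the triangle each of whose sides supports a point of $S$, in agreement with Observation~\ref{side-point-obs}. Applying this with $S=\{p,q\}$ gives $t(p,q)=\{x:\ell_i(x)\le \max(\ell_i(p),\ell_i(q))\}$, and applying it with $S=\{p,r\}$ describes $t(p,r)$ by the parameters $a_i'=\max(\ell_i(p),\ell_i(r))$.

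Now the conclusion falls out. The hypothesis $r\in t(p,q)$ says precisely that $\ell_i(r)\le \max(\ell_i(p),\ell_i(q))=a_i$ for each $i$, and trivially $\ell_i(p)\le a_i$. Hence $a_i'=\max(\ell_i(p),\ell_i(r))\le a_i$ for every $i$, and the containment criterion of the first paragraph yields $t(p,r)\subseteq t(p,q)$. Exchanging the roles of $p$ and $q$ gives $t(q,r)\subseteq t(p,q)$ in the same way, completing the proof.

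The routine final inequality is not where the difficulty lies; the step needing the most care is the setup, namely verifying that ``homothet of $\ConvexShape$'' corresponds exactly to varying $(a_1,a_2,a_3)$ (so that no shape or orientation constraint is silently lost), that the minimal such triangle containing $\{p,q\}$ is the coordinate-wise maximum, and that containment reduces to the three scalar inequalities. Once this monotone framework is in place the statement is immediate, and it has the pleasant side effect of reproving Observation~\ref{side-point-obs}.
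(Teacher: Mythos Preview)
Your argument is correct. The representation of every downward equilateral triangle as $\{x:\ell_i(x)\le a_i,\ i=1,2,3\}$ with the three outward-normal functionals is exactly right (since the three unit normals sum to zero, every triple $(a_1,a_2,a_3)$ with large enough sum cuts out a homothet of $\ConvexShape$, and conversely), and your identification of $t(p,q)$ with the coordinate-wise maximum $a_i=\max(\ell_i(p),\ell_i(q))$ is precisely the ``push each side in until it hits a point'' description. From there the containment is the one-line computation you give.

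The paper does not actually prove this observation: it simply refers the reader to Figure~\ref{smaller-triangle-fig} and treats the statement as visually evident. Your proof is therefore not a different route to the same argument but a genuine formalization where the paper offers none. What your half-plane/linear-functional framework buys is that Observation~\ref{side-point-obs}, Observation~\ref{obs1}, and the monotonicity $\max\{t(p,r),t(q,r)\}<t(p,q)$ all fall out of a single uniform description, with no case analysis on which of $p,q$ sits at a vertex of $t(p,q)$; the paper's pictorial treatment implicitly splits into the three cases drawn in the figure. The cost is the small amount of setup you flag at the end, but that setup is routine and your caveats about it are accurate.
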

As a direct consequence of Observation \ref{obs1}, if a point $r$ is contained in $t(p,q)$, then $\max\{t(p,r), \allowbreak t(q,r)\}<t(p,q)$. It is obvious that,

\begin{observation}
 \label{equal-triangles}
 For each two points $p,q\in P$, $t(p,q)=t'(p,q)$.
\end{observation}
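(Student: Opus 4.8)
The plan is to exploit the point symmetry that relates downward and upward triangles. The key fact I would establish first is that rotating any downward equilateral triangle by $180^\circ$ about an arbitrary center produces an \emph{upward} equilateral triangle of the same side length, and hence of the same area: a half-turn reverses every direction, so the downward-pointing vertex of $\ConvexShape$ is sent to an upward-pointing vertex, while lengths (and therefore areas) are preserved. Concretely, the three outward edge-normals of $\ConvexShape$ point at $90^\circ,210^\circ,330^\circ$, and the half-turn reverses them to $270^\circ,30^\circ,150^\circ$, which are exactly the edge-normal directions of an upward triangle.

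Next I would set up the symmetry. Let $T=t(p,q)$ be the smallest downward triangle having $p$ and $q$ on its boundary, let $m$ be the midpoint of the segment $pq$, and consider the half-turn $\rho\colon x\mapsto 2m-x$ about $m$. Since $\rho(p)=2m-p=q$ and $\rho(q)=p$, the map $\rho$ swaps $p$ and $q$. Because $\rho$ is a bijection sending $\partial T$ onto $\partial(\rho(T))$, and both $p,q\in\partial T$, their images $q,p$ lie on $\partial(\rho(T))$. By the fact above, $\rho(T)$ is an upward equilateral triangle having $p$ and $q$ on its boundary, with the same area as $T$.

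I would then finish by a minimality/symmetry argument. Since $t'(p,q)$ is by definition the smallest upward triangle through $p$ and $q$, and $\rho(T)$ is one such triangle, we get $t'(p,q)\le \rho(T)=T=t(p,q)$ in the area order of the preliminaries. Running the identical argument starting from the smallest upward triangle and applying the half-turn about $m$ (which now produces a downward triangle through $p,q$) yields the reverse inequality $t(p,q)\le t'(p,q)$. The two inequalities together give $t(p,q)=t'(p,q)$.

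The only step that genuinely requires care, and the main obstacle, is the orientation claim: one must be sure that the half-turn maps the family of downward triangles \emph{onto} the family of upward triangles rather than merely translating within one family, and that it preserves size; once this is pinned down, the rest is a routine symmetry argument. As an alternative that sidesteps the reflection bookkeeping, I could instead write the size of the smallest downward triangle through $p,q$ as a sum $\frac{1}{3r}\sum_{i}\max(\langle u_i,p\rangle,\langle u_i,q\rangle)$ over the three unit normals $u_i$ of $\ConvexShape$ (with inradius $r$), the upward size as $-\frac{1}{3r}\sum_{i}\min(\langle u_i,p\rangle,\langle u_i,q\rangle)$, and use the identity $\max+\min=\text{sum}$ together with $\sum_i u_i=0$ to conclude that the two sizes coincide directly.
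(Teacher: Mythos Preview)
Your proof is correct. Both arguments work: the half-turn about the midpoint of $pq$ is the right symmetry, and your observation that it exchanges the families of downward and upward triangles while preserving area is exactly what is needed; the alternative support-function computation using $\max+\min=\text{sum}$ and $\sum_i u_i=0$ is also valid and slick.

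The paper, however, gives no argument at all: the observation is introduced with ``It is obvious that'' and left unproved. So there is nothing to compare against beyond noting that you have supplied a full justification where the paper relies on the reader's geometric intuition. One small remark: the equality $t(p,q)=t'(p,q)$ in the paper is to be read in the area order defined just above the observation (a downward and an upward triangle are of course never equal as point sets), and you have implicitly and correctly adopted that reading.
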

Thus, we define $X(p,q)$ as a regular hexagon centred at $p$ which has $q$ on its boundary, and its sides are parallel to $l_p^0$, $l_p^{60}$, and $l_p^{120}$. 
\begin{observation}
\label{obs2}
 If $X(p,q)$ contains a point $r$, then $t(p,r)<t(p,q)$.
\end{observation}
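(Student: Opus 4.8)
The plan is to recognise $X(p,q)$ as a sublevel set of the triangular distance from $p$ and to read the statement off from that identification. Precisely, I would show that the set $\{r:t(p,r)\le t(p,q)\}$ is a closed regular hexagon centred at $p$ whose sides are parallel to $l_p^0$, $l_p^{60}$ and $l_p^{120}$, and that this hexagon is exactly $X(p,q)$. The observation then becomes the elementary fact that the interior of such a hexagon is the strict sublevel set $\{r:t(p,r)<t(p,q)\}$.

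The first step is to prove that $r\mapsto t(p,r)$ is invariant under the rotation $\rho$ by $60^\circ$ about $p$. Indeed, $\rho$ maps the family of downward triangles onto the family of upward triangles and preserves area, so it carries the smallest downward triangle through $p$ and $r$ to the smallest upward triangle through $p$ and $\rho(r)$. By Observation~\ref{equal-triangles} the smallest upward and smallest downward triangles through a given pair of points have equal size, whence $t(p,\rho(r))=t(p,r)$. Consequently every level set $\{r:t(p,r)=\text{const}\}$ is invariant under the six rotations by multiples of $60^\circ$ about $p$.

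The second step is a single-sector computation. For $r$ lying in the cone directly below $p$, Observation~\ref{side-point-obs} forces $p$ onto the horizontal (top) side of $t(p,r)$ and $r$ onto the opposite vertex; holding the size of $t(p,r)$ fixed and sliding the triangle horizontally, the locus of the bottom vertex $r$ is a horizontal segment, perpendicular to the bisector of the cone, whose endpoints lie on the two lines bounding the cone. Rotating this segment by multiples of $60^\circ$ and invoking the invariance from the first step, the six segments glue into the boundary of a regular hexagon centred at $p$ with sides parallel to $l_p^0$, $l_p^{60}$ and $l_p^{120}$. Hence the sublevel set $\{r:t(p,r)\le t(p,q)\}$ is such a (filled) hexagon, and since a regular hexagon of this orientation centred at $p$ and passing through $q$ is unique, it coincides with $X(p,q)$. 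Therefore the interior of $X(p,q)$ equals $\{r:t(p,r)<t(p,q)\}$, which is the claim.

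I expect the second step to be the main obstacle: one must check, via Observation~\ref{side-point-obs}, that in each cone exactly one of $p,r$ is pinned to a vertex of $t(p,r)$ and the other to the opposite side, so that the level set really is a straight segment perpendicular to the bisector, and that the six such pieces meet at the cone boundaries and close up into an \emph{equilateral} hexagon. The $60^\circ$-invariance guarantees regularity, but it does not by itself certify that the single-sector locus is a full side meeting the bounding rays at the hexagon's vertices. A minor but worth-stating point is strictness: a point on the boundary of $X(p,q)$ gives $t(p,r)=t(p,q)$, so the strict inequality holds precisely for $r$ in the interior of $X(p,q)$, which is how "contains" should be read here.
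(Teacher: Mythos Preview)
The paper gives no proof of this observation; it is stated as self-evident immediately after the definition of $X(p,q)$ and Observation~\ref{equal-triangles}. Your plan is therefore not competing with any argument in the paper, and what you have written is a correct justification of what the authors leave implicit.

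Your two steps are both sound. The rotational invariance in step one is exactly right: $\rho$ carries the smallest downward triangle through $p,r$ to the smallest upward triangle through $p,\rho(r)$, and Observation~\ref{equal-triangles} closes the loop. For step two, the concern you raise about whether the single-sector locus really spans a full hexagon side is easily settled: in the cone $C_p^5$ bounded by $l_p^{60}$ and $l_p^{120}$, the condition $r\in C_p^5$ is precisely $|r_x-p_x|\le (p_y-r_y)/\sqrt{3}$, which is the same inequality that guarantees $p$ lies on the top side of the downward triangle with bottom vertex at $r$ and height $p_y-r_y$. Hence the level set $\{r\in C_p^5:t(p,r)=c\}$ is the full horizontal segment at height $p_y-c$ between the two bounding rays, and its endpoints are vertices of the hexagon. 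The $60^\circ$ symmetry then does the gluing.

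A slightly slicker route, perhaps closer to what the authors have in mind, is the difference-body picture: $t(p,r)\le c$ holds iff some translate of the size-$c$ downward triangle $\nabla_c$ contains both $p$ and $r$, so $\{r:t(p,r)\le c\}=p+(\nabla_c-\nabla_c)$, and the Minkowski difference body of an equilateral triangle is the regular hexagon with sides parallel to those of the triangle. This avoids the sector analysis altogether, but your approach has the merit of staying entirely within the paper's own vocabulary.
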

For each edge $(p,q)$ in \kTD{k}{} we define its {\em weight}, $w(p,q)$, to be equal to the area of $t(p,q)$.
\subsection{Some Graph-Theoretic Notions}
\label{graph-notions}
A graph $G$ is {\em connected} if there is a path between any pair of vertices in $G$. Moreover, $G$ is $k$-{\em connected} if there does not exist a set of at most $k-1$ vertices whose removal disconnects $G$. In case $k=2$, $G$ is called {\em biconnected}. In other words a graph $G$ is biconnected iff there is a simple cycle between any pair of its vertices. A {\em matching} in $G$ is a set of edges in $G$ without common vertices. A {\em perfect matching} is a matching which matches all the vertices of $G$. A {\em Hamiltonian cycle} in $G$ is a cycle (i.e., closed loop) through $G$ that visits each vertex of $G$ exactly once.
In case that $G$ is an edge-weighted graph, a {\em bottleneck matching} (resp. {\em bottleneck Hamiltonian cycle}) is defined to be a perfect matching (resp. Hamiltonian cycle) in $G$ with the weight of the maximum-weight edge is minimized. A {\em bottleneck biconnected spanning subgraph} of $G$ is a spanning subgraph, $G'$, of $G$ which is biconnected and the weight of the longest edge in $G'$ is minimized. For $H\subseteq G$ we denote the bottleneck of $H$, i.e., the length of the longest edge in $H$, by $\lambda(H)$.

For a graph $G=(V,E)$ and $K\subseteq V$, let $G-K$ be the subgraph obtained from $G$ by removing vertices in $K$, and let $o(G-K)$ be the number of odd components in $G-K$. The following theorem by Tutte~\cite{Tutte1947} gives a characterization of the graphs which have perfect matching: 

\begin{theorem}[Tutte~\cite{Tutte1947}] 
\label{Tutte} 
$G$ has a perfect matching if and only if $o(G-K)\le |K|$ for all $K\subseteq V$.
\end{theorem}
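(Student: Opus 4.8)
The statement is the classical theorem of Tutte, so I would prove it from scratch along the lines of Lov\'asz's extremal argument. There are two directions. The necessity ($\Rightarrow$) is the easy half: given a perfect matching $M$ and any $K\subseteq V$, each odd component $C$ of $G-K$ has odd order, so $M$ cannot saturate $C$ entirely from within, forcing at least one vertex of $C$ to be matched to a vertex of $K$ (an edge leaving a component of $G-K$ can only land in $K$). Since distinct odd components use distinct such partners in $K$, I get $o(G-K)\le|K|$, and in particular, taking $K=\emptyset$, that $|V|$ is even.

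For the sufficiency ($\Leftarrow$) I would argue by contradiction using an edge-maximal counterexample. Suppose some graph satisfies the condition yet has no perfect matching; since adding an edge can only merge components (hence preserves the condition) while only making matchings easier, I may pass to a graph $G$ that is edge-maximal subject to satisfying the condition and having no perfect matching, i.e., adding any missing edge creates one. Let $S$ be the set of vertices adjacent to every other vertex. The heart of the proof is the structural claim that every connected component of $G-S$ is a complete graph.

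To establish the claim I would assume some component of $G-S$ is not complete, pick a closest pair of nonadjacent vertices in it to obtain vertices $a,b,c$ with $a\sim b$, $b\sim c$, $a\not\sim c$, and --- since $b\notin S$ --- a vertex $d$ with $b\not\sim d$. Edge-maximality gives perfect matchings $M_1$ of $G+ac$ and $M_2$ of $G+bd$, and because $G$ itself has none, $ac\in M_1$ and $bd\in M_2$. Examining the symmetric difference $M_1\triangle M_2$, whose components are alternating even cycles, I would reroute along the cycle(s) carrying $ac$ and $bd$ to build a perfect matching of $G$ that avoids both added edges, a contradiction. When $ac$ and $bd$ lie on different cycles the surgery is immediate (flip only the cycle through $ac$, leaving the cycle through $bd$ on its $M_1$-edges); the main obstacle is the case where they lie on a \emph{common} cycle, where I must exploit that $b$ is a genuine neighbour of both $a$ and $c$ to repair the cycle through $ba$ or $bc$ while discarding both $ac$ and $bd$.

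Granting the claim, I would finish by constructing a perfect matching directly and deriving the final contradiction. The hypothesis with $K=S$ gives $o(G-S)\le|S|$, so I can injectively send one representative of each odd component of $G-S$ to a distinct vertex of $S$ (legal, as $S$-vertices are universal). Each odd component minus its representative then has even order and, being a clique, is matched internally; even components are matched internally outright; and the leftover $|S|-o(G-S)$ vertices of $S$ form a clique that a parity count --- using that $|V|$ is even and $|V|\equiv|S|+o(G-S)\pmod 2$ --- shows can be perfectly matched among themselves. This exhibits a perfect matching of $G$, contradicting the choice of $G$ and completing the sufficiency direction.
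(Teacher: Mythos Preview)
Your proof is a correct rendering of the standard Lov\'asz extremal argument for Tutte's theorem, and the steps you outline---including the edge-maximal counterexample, the structural claim that components of $G-S$ are cliques, the $M_1\triangle M_2$ rerouting, and the final parity-based construction---are all sound.

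However, there is nothing to compare against: the paper does not prove this statement. Theorem~\ref{Tutte} is quoted as a classical result, attributed to Tutte~\cite{Tutte1947}, and used as a black box (in the proof of Theorem~\ref{mt-thr}) without any accompanying argument. So your proposal supplies a proof where the paper supplies only a citation; the approaches are not so much different as incommensurable.
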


Berge~\cite{Berge1958} extended Tutte’s theorem to a formula (known as Tutte-Berge formula) for the maximum size of a matching in a graph. In a graph $G$, the {\em deficiency}, $\text{def}_G(K)$, is $o(G-K)-|K|$. Let $\text{def}(G)=\max_{K\subseteq V}{\text{def}_G(K)}$.

\begin{theorem}[Tutte-Berge formula; Berge~\cite{Berge1958}] 
\label{Berge} 
The size of a maximum matching in $G$ is $$\frac{1}{2}(n-\mathrm{def}(G)).$$
\end{theorem}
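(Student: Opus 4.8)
The plan is to derive the formula from Tutte's theorem (Theorem~\ref{Tutte}), which is already available, by a standard gadget construction. Write $d=\mathrm{def}(G)$ and $n=|V|$. I would first record two easy facts. Since $K=\emptyset$ gives $o(G)-0\ge 0$, we have $d\ge 0$. For the parity, note that for every $K\subseteq V$ the graph $G-K$ has $n-|K|$ vertices, the even components account for an even number of them, so $o(G-K)\equiv n-|K|\pmod 2$, whence $o(G-K)-|K|\equiv n\pmod 2$. Thus $d\equiv n\pmod 2$ and $\tfrac12(n-d)$ is an integer. The statement then follows from two inequalities: every matching misses at least $d$ vertices (so the maximum matching has size at most $\tfrac12(n-d)$), and some matching misses at most $d$ vertices (so the maximum matching has size at least $\tfrac12(n-d)$).

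For the first inequality, I would fix a set $K$ attaining the maximum in the definition of $d$ and consider any matching $M$. Each odd component of $G-K$ contains at least one vertex that is either unmatched by $M$ or matched across to a vertex of $K$; since at most $|K|$ vertices can be matched into $K$, at least $o(G-K)-|K|=d$ vertices are left unmatched. Hence $|M|\le\tfrac12(n-d)$.

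For the second, harder inequality I would build an auxiliary graph $G'$ by adjoining a set $U$ of $d$ new vertices, making $U$ a clique and joining every vertex of $U$ to every vertex of $V$. Note that $|V(G')|=n+d$ is even by the parity remark. The key step is to verify Tutte's condition $o(G'-K')\le|K'|$ for every $K'\subseteq V(G')$, split into two cases. If some $u\in U$ survives (i.e.\ $U\not\subseteq K'$), then $u$ is adjacent to all surviving vertices, so $G'-K'$ has a single component; it is odd only when the number of survivors $(n+d)-|K'|$ is odd, i.e.\ only when $|K'|$ is odd, in which case $|K'|\ge 1\ge o(G'-K')$, and otherwise $o(G'-K')=0$. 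If instead $U\subseteq K'$, write $K=K'\cap V$, so that $G'-K'=G-K$ and $|K'|=d+|K|$; the definition of $d$ gives $o(G-K)\le d+|K|=|K'|$. In both cases the condition holds, so by Theorem~\ref{Tutte} the graph $G'$ has a perfect matching $M'$. Restricting $M'$ to the edges of $G$ yields a matching of $G$ in which the only possibly unmatched vertices of $V$ are those matched to the $d$ vertices of $U$; hence at most $d$ vertices of $V$ are missed, giving a matching of size at least $\tfrac12(n-d)$.

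The main obstacle is the case analysis verifying Tutte's condition for $G'$; everything else is bookkeeping. The only place where real care is needed is the parity counting, which simultaneously guarantees that $\tfrac12(n-d)$ is an integer and that the single surviving component in the first case has the correct parity; I would also check the degenerate situations $K'=\emptyset$ and $d=0$ (where $U=\emptyset$ and every $K'$ falls into the second case) separately, so that the universal-vertex argument does not break down.
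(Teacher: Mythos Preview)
Your derivation is correct and is the standard reduction of the Tutte--Berge formula to Tutte's theorem via the universal-vertex gadget. However, there is nothing to compare: in the paper this theorem is not proved at all. It is quoted as a classical result of Berge and used as a black box in Section~\ref{matching2}; the paper supplies no argument of its own. So your proposal goes well beyond what the paper does with this statement. If anything, for the purposes of this paper a one-line citation would have sufficed, but your proof is a perfectly good self-contained justification and the parity bookkeeping and the two-case verification of Tutte's condition for $G'$ are handled correctly, including the degenerate case $d=0$.
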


For an edge-weighted graph $G$ we define the {\em weight sequence} of $G$, \WS{G}, as the sequence containing the weights of the edges of $G$ in non-increasing order. A graph $G_1$ is said to be less than a graph $G_2$ if \WS{G_1} is lexicographically smaller than \WS{G_2}.

\section{Connectivity}
\label{connectivity}
In this section we consider the connectivity of higher-order triangular-distance Delaunay graphs.

\subsection{$(k+1)$-connectivity}
\label{connectivity-k-plus-1}
For a set $P$ of points in the plane, the TD-Delaunay graph, i.e., \kTD{0}{}, is not necessarily a triangulation \cite{Chew1989}, but it is connected and internally triangulated \cite{Babu2013}. As shown in Figure~\ref{TD}(a), the outer face may not be convex and hence \kTD{0}{} is not necessarily biconnected. As a warm up exercise we show that every \kTD{k}{} is $(k+1)$-connected. 

\begin{theorem}
\label{k-connectivity-thr}
 For every point set $P$, \kTD{k}{} is $(k+1)$-connected. In addition, for every $k$, there exists a point set $P$ such that \kTD{k}{} is not $(k+2)$-connected.
\end{theorem}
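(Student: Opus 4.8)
The plan is to treat the two assertions separately: $(k+1)$-connectivity follows from a short monotonicity argument, while the tightness needs an explicit construction.

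For the first assertion, the point I would exploit is that the triangle $t(p,q)$ depends only on $p$ and $q$ and not on the ambient point set. Fix any $K\subseteq P$ with $|K|\le k$ and write $P'=P\setminus K$; assume $n\ge k+2$ so the statement is meaningful. I claim that the ordinary TD-Delaunay graph of $P'$ (that is, \kTD{0}{} on $P'$) is a spanning subgraph of the graph obtained from \kTD{k}{} by deleting $K$. Indeed, if $(p,q)$ is an edge of \kTD{0}{} on $P'$, then the interior of $t(p,q)$ contains no point of $P'$, so every point of $P$ lying in that interior must belong to $K$; there are therefore at most $|K|\le k$ of them, and hence $(p,q)$ is an edge of \kTD{k}{} on $P$ which survives the deletion of $K$ (because $p,q\in P'$). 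Since \kTD{0}{} on $P'$ is connected~\cite{Babu2013}, so is \kTD{k}{} after deleting any $K$ with $|K|\le k$. As this holds for every such $K$, \kTD{k}{} is $(k+1)$-connected. I expect this half to be essentially free.

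For the tightness I would build, for each $k$, a point set carrying a vertex cut of size exactly $k+1$. Put $v$ at the origin, a tight cluster $s_1,\dots,s_{k+1}$ just above it (near $(0,1)$, generically perturbed to keep general position), and a single far point $b=(0,H)$ with $H$ large, so that $n=k+3$. Then $t(v,b)$ is a tall downward triangle with apex $v$ whose interior contains all $k+1$ of the $s_i$; its interior thus holds more than $k$ points, and $(v,b)$ is \emph{not} an edge of \kTD{k}{}. Conversely, each $t(v,s_i)$ is small and does not reach $b$, so its interior meets $P$ only in the remaining $s_j$ --- at most $k$ points --- and hence $(v,s_i)$ \emph{is} an edge. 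Therefore the neighbours of $v$ are exactly $s_1,\dots,s_{k+1}$, and deleting this set isolates $v$ from $b$. This is a separating set of size $k+1$, so \kTD{k}{} is not $(k+2)$-connected (and, combined with the first part, its connectivity is exactly $k+1$).

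The only real work is the geometric bookkeeping in this construction: I must confirm that every $s_i$ lies strictly inside $t(v,b)$ while $b$ lies outside every $t(v,s_i)$, and that a small generic perturbation preserves both facts. Using that a downward triangle with apex at the origin has horizontal cross-section of half-width $y/\sqrt3$ at height $y$ (independent of the triangle's overall size), it suffices to take the $s_i$ near height $1$ with $|x|<1/\sqrt3$ and to take $H$ large; Observation~\ref{obs1} then gives the nesting $t(v,s_i)<t(v,b)$ and makes both containments robust under perturbation. I anticipate this verification, rather than any graph-theoretic difficulty, to be the main obstacle.
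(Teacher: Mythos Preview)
Your proposal is correct on both halves, and close to the paper's argument but with a genuinely different packaging for the first assertion.

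For $(k+1)$-connectivity, the paper argues directly: assume deleting some $K$ with $|K|\le k$ disconnects the graph, take the smallest triangle $t(a,b)$ with $a$ and $b$ in different components, show by minimality that its interior contains no point of $P\setminus K$, and conclude that $(a,b)$ is an edge of \kTD{k}{} after all. Your reduction to the connectivity of \kTD{0}{} on $P'=P\setminus K$ is a cleaner repackaging of the same idea --- the paper's minimal-triangle step is precisely the standard proof that \kTD{0}{} is connected --- so you trade self-containment for brevity and modularity. Both are equally short; yours makes the monotonicity (``$t(p,q)$ is independent of the ambient set'') the visible mechanism, which is arguably more reusable.

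For tightness, the two constructions are essentially the same in spirit: place $k+1$ separator points between two sides. The paper does it abstractly via iterated cone intersections (put $A$ anywhere, then $K$ in $\bigcap_{p\in A}C_p^4$, then $B$ in $\bigcap_{p\in K}C_p^4$) and asserts that every $A$--$B$ path in \kTD{k}{} must pass through $K$. You give explicit coordinates and argue by degree: the neighbourhood of $v$ is exactly $\{s_1,\dots,s_{k+1}\}$. Your version requires the geometric bookkeeping you correctly identify (and Observation~\ref{obs1} handles it); the paper's version sidesteps coordinates but leaves the ``every path goes through $K$'' claim to the reader. Either is fine for this level of result.
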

\begin{proof}
We prove the first part of this theorem by contradiction. Let $K$ be the set of (at most) $k$ vertices removed from \kTD{k}{}, and let $\mathcal{C}=\{C_1, C_2, \dots, C_m\}$, where $m>1$, be the resulting maximal connected components. Let $T$ be the set of all triangles defined by any pair of points belonging to different components, i.e., $T=\{t(a,b): a\in C_i, b\in C_j, i\neq j\}$. Consider the smallest triangle $\Tm\in T$. Assume that $\Tm$ is defined by two points $a$ and $b$, i.e., $\Tm=t(a,b)$, where $a\in C_i$, $b\in C_j$, and $i\neq j$.

{\em Claim 1}: $\Tm$ does not contain any point of $P\setminus K$ in its interior.
By contradiction, suppose that $\Tm$ contains a point $c\in P\setminus K$ in its interior. Three cases arise: (i) $c\in C_i$, (ii) $c\in C_j$, (iii) $c\in C_l$, where $l\neq i$ and $l\neq j$. In case (i) the triangle $t(c,b)$ between $C_i$ and $C_j$ is contained in $t(a,b)$. In case (ii) the triangle $t(a,c)$ between $C_i$ and $C_j$ is contained in $t(a,b)$. In case (iii) both triangles $t(a,c)$ and  $t(c,b)$ are contained in $t(a,b)$. All cases contradict the minimality of $t(a,b)=\Tm$. Thus, $\Tm$ contains no point of $P\setminus K$ in its interior, proving Claim 1.

By Claim 1, $\Tm=t(a,b)$ may only contain points of $K$. Since $|K|\le k$, there must be an edge between $a$ and $b$ in \kTD{k}{}. This contradicts that $a$ and $b$ belong to different components $C_i$ and $C_j$ in $\mathcal{C}$. Therefore, \kTD{k}{} is $(k+1)$-connected.

We present a constructive proof for the second part of theorem. Let $P=A\cup B\cup K$, where $|A|,|B|\ge 1$ and $|K|=k+1$. Place the points of $A$ in the plane. Let $C_A^{4}=\bigcap_{p\in A}{C_p^4}$. Place the points of $K$ in $C_A^4$. Let $C_K^4=\bigcap_{p\in K}{C_p^4}$. Place the points of $B$ in $C_K^4$. Consider any pair $(a,b)$ of points where $a\in A$ and $b\in B$. It is obvious that any path between $a$ and $b$ in \kTD{k}{} goes through the vertices in $K$. Thus by removing the vertices in $K$, $a$ and $b$ become disconnected. Therefore, \kTD{k}{} of $P$ is not $(k+2)$-connected. 
\end{proof}
\subsection{Bottleneck Biconnected Spanning Graph}
As shown in Figure~\ref{TD}(a), \kTD{0}{} may not be biconnected. By Theorem~\ref{k-connectivity-thr}, \kTD{1}{} is biconnected. In this section we show that a bottleneck biconnected spanning graph of $P$ is contained in \kTD{1}{}. 

\begin{theorem}
\label{biconnectivity-thr}
 For every point set $P$, \kTD{1}{} contains a bottleneck biconnected spanning graph of $P$.
\end{theorem}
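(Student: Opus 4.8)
The plan is to take, among all biconnected spanning subgraphs of the complete geometric graph on $P$, one whose weight sequence $\WS{\MOPT}$ is lexicographically smallest, and to show that such an $\MOPT$ is automatically contained in \kTD{1}{}. Since a lexicographically smallest weight sequence has in particular the smallest possible maximum entry, $\MOPT$ is then a bottleneck biconnected spanning graph, which is what we want. Two preliminary observations set up the argument. First, $\MOPT$ is \emph{edge-minimal}: if some edge could be deleted while keeping biconnectivity, the resulting weight sequence (the old one with one entry removed) would be lexicographically smaller, contradicting the choice of $\MOPT$. Second, suppose for contradiction that $\MOPT$ contains an edge $(p,q)$ that is not an edge of \kTD{1}{}; then the interior of $t(p,q)$ contains at least two points of $P$, and by Observation~\ref{obs1} every such interior point $r$ satisfies $w(p,r)<w(p,q)$ and $w(q,r)<w(p,q)$.

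Next I would analyse the block structure of $\MOPT-(p,q)$. Because $\MOPT$ is biconnected, $(p,q)$ lies on a cycle, so $\MOPT-(p,q)$ is still connected; by edge-minimality it is no longer biconnected, so its blocks form a nontrivial chain $B_1,\dots,B_t$ with $t\ge 2$ along the $p$--$q$ path, joined at cut vertices $v_1,\dots,v_{t-1}$, with $p\in B_1$ and $q\in B_t$. The point of this decomposition is that adding back a single edge joining $B_1$ to $B_t$, or a pair of edges routed through a common intermediate vertex, fuses the whole chain back into one block. I would then reroute using the interior points of $t(p,q)$: if some interior point $r$ is \emph{not} a cut vertex, it lies in a unique block $B_i$, and replacing $(p,q)$ by $(r,q)$ (if $i=1$), by $(p,r)$ (if $i=t$), or by the pair $(p,r),(r,q)$ (if $1<i<t$) restores biconnectivity using only strictly lighter edges; if instead \emph{every} interior point is a cut vertex, I would pick two distinct such cut vertices $v_a,v_b$ with $a<b$ and replace $(p,q)$ by $(p,v_b)$ and $(v_a,q)$, whose two repair cycles overlap in the blocks $B_{a+1},\dots,B_b$ and therefore again merge the chain into a single block.

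In every case the new graph is a biconnected spanning subgraph obtained by deleting $(p,q)$ and inserting edges all of weight strictly less than $w(p,q)$, so its weight sequence is lexicographically smaller than $\WS{\MOPT}$ --- the contradiction that forces every edge of $\MOPT$ to lie in \kTD{1}{}. The main obstacle is the rerouting step: one must check that the edge exchanges restore \emph{global} biconnectivity (not merely place the former cut vertices on a common cycle), and, crucially, that having at least two interior points is exactly what makes this always possible. This is where \kTD{1}{} rather than \kTD{0}{} is essential: a single interior point may coincide with a cut vertex of $\MOPT-(p,q)$, and routing both replacement edges through that vertex fails to heal it, whereas two points guarantee either a usable non-cut vertex or two distinct cut vertices with overlapping repair cycles. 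A secondary routine point is that deleting one edge while inserting one or two strictly lighter edges still decreases the weight sequence, which follows from the term-by-term comparison of the descending-order sequences.
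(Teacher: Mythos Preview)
Your argument is correct and takes a genuinely different, more structural route than the paper. Both proofs start identically---choose a biconnected spanning subgraph $G^*$ of lexicographically minimum weight sequence, suppose some edge $(p,q)\notin\text{\kTD{1}{}}$ survives, and note that $G^*-(p,q)$ is connected but not biconnected---but they diverge at the repair step. The paper adds \emph{all four} edges $(p,x),(q,x),(p,y),(q,y)$ for two interior points $x,y$ of $t(p,q)$, and then for every pair of vertices whose cycles all used $(p,q)$ runs a four-case analysis on where $x$ and $y$ lie along one such cycle to exhibit a replacement cycle. You instead analyse the block--cut tree of $G^*-(p,q)$, observe it is a path $B_1,\dots,B_t$ with $p\in B_1$ and $q\in B_t$ (neither a cut vertex), and add only the one or two edges needed to merge the chain, branching on whether some interior point avoids the cut vertices. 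What your approach buys is a clean structural explanation of \emph{why} two interior points are the threshold: a single interior point could coincide with the unique cut vertex of a two-block chain, where routing both repair edges through it leaves that vertex a cut vertex, whereas two points guarantee either a non-cut interior vertex or two distinct cut vertices $v_a,v_b$ with $a<b$ so that the repair cycles through $(p,v_b)$ and $(v_a,q)$ overlap. The paper's approach, in exchange, stays entirely at the level of explicit cycles and avoids invoking the block--cut decomposition and the ``adding an edge merges the blocks along the tree path'' fact; its uniform four-edge addition also sidesteps the small bookkeeping you need to confirm the inserted edges are genuinely new.
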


\begin{proof}                                                                                       
Let $\mathcal{G}$ be the set of all biconnected spanning graphs with vertex set $P$. We define a total order on the elements of $\mathcal{G}$ by their weight sequence. If two elements have the same weight sequence, we break the ties arbitrarily to get a total order.
Let $G^* = (P, E)$ be a graph in $\mathcal{G}$ with minimal weight sequence. Clearly, $G^*$ is a bottleneck biconnected spanning graph of $P$. We will show that all edges of $G^*$ are in \kTD{1}{}. By contradiction suppose that some edges in $E$ do not belong to \kTD{1}{}, and let $e = (a, b)$ be the longest one (by the area of the triangle $t(a,b)$). If the graph $G^*-\{e\}$ is biconnected, then by removing $e$, we obtain a biconnected spanning graph $G$ with $\WS{G}<\WS{G^*}$; contradicting the minimality of $G^*$. Thus, there is a pair $(p,q)$ of points such that any cycle between $p$ and $q$ in $G^*$ goes through $e$. Since $(a,b)\notin$~\kTD{1}{}, $t(a,b)$ contains at least two points of $P$, say $x$ and $y$. Let $G$ be the graph obtained from $G^*$ by removing the edge $(a,b)$ and adding the edges $(a,x)$, $(b,x)$, $(a,y)$, $(b,y)$. We show that in $G$ there is a cycle between $p$ and $q$ which does not go through $e$. Consider a cycle $C$ in $G^*$ between two points $p$ and $q$ (which goes through $e$). If none of $x$ and $y$ belong to $C$, then $(C-\{(a,b)\})\allowbreak\cup\allowbreak\{(a,x),(b,x)\}$ is a cycle in $G$ between $p$ and $q$. If one of $x$ or $y$, say $x$, belongs to $C$, then $(C-\{(a,b)\})\allowbreak\cup\allowbreak\{(a,y),(b,y)\}$ is a cycle in $G$ between $p$ and $q$. If both $x$ and $y$ belong to $C$, consider the partition of $C$ into four parts: (a) edge $(a,b)$, (b) path $\delta_{bx}$ between $b$ and $x$, (c) path $\delta_{xy}$ between $x$ and $y$, and (d) path $\delta_{ya}$ between $y$ and $a$. 
There are four cases:
\begin{enumerate}
 \item None of $p$ and $q$ are on $\delta_{xy}$. Then $\delta_{bx}\cup \delta_{ya} \cup \{(a,x),(b,y)\}$ is a cycle in $G$ between $p$ and $q$.
 \item Both $p$ and $q$ are on $\delta_{xy}$. Then $\delta_{xy}\cup \{(a,x),(a,y)\}$ is a cycle in $G$ between $p$ and $q$.
 \item One of $p$ and $q$ is on $\delta_{xy}$ and the other one is on $\delta_{bx}$. Then $\delta_{bx}\cup\delta_{xy}\cup\{(b,y)\}$ is a cycle in $G$ between $p$ and $q$.
 \item One of $p$ and $q$ is on $\delta_{xy}$ and the other one is on $\delta_{ya}$. Then $\delta_{xy}\cup\delta_{ya}\cup\{(a,x)\}$ is a cycle in $G$ between $p$ and $q$.
\end{enumerate}

Thus, between any pair of points in $G$ there exists a cycle, and hence $G$ is biconnected. Since $x$ and $y$ are inside $t(a,b)$, by Observation~\ref{obs1}, $\max\{t(a,x), t(a, y), t(b,x),t(b,y)\}<t(a,b)$. Therefore, $\WS{G}<\WS{G^*}$; contradicting the minimality of $G^*$.   
\end{proof}

\section{Hamiltonicity}
\label{Hamiltonicity}

In this section we show that \kTD{7}{} contains a bottleneck Hamiltonian cycle. In addition, we will show that for some point sets, \kTD{5}{} does not contain any bottleneck Hamiltonian cycle.

\begin{theorem}
\label{hamiltonicity-thr}
 For every point set $P$, \kTD{7}{} contains a bottleneck Hamiltonian cycle.
\end{theorem}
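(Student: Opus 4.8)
The plan is to run a global exchange argument on a weight-sequence-minimal Hamiltonian cycle, in the same spirit as the proof of Theorem~\ref{biconnectivity-thr}. I would consider the complete graph on $P$, weight each edge $(u,v)$ by the area of $t(u,v)$, and let $H$ be a Hamiltonian cycle whose weight sequence $\WS{H}$ is lexicographically minimum (ties broken arbitrarily). Since the first entry of $\WS{H}$ is the largest edge, $H$ is automatically a bottleneck Hamiltonian cycle. I then suppose, for contradiction, that $H$ uses an edge not in \kTD{7}{}, and let $e=(a,b)$ be the longest such edge. By definition of \kTD{7}{}, the interior of $t(a,b)$ contains a set $S$ of at least $8$ points of $P$. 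The goal is to exhibit another Hamiltonian cycle $H'$ with $\WS{H'}<\WS{H}$, contradicting minimality; note that $H'$ need not itself lie in \kTD{7}{}, so the whole difficulty is combinatorial-geometric rather than about the graph.

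The enabling fact is Observation~\ref{obs1}: every $c\in S$ satisfies $t(a,c)<t(a,b)$ and $t(b,c)<t(a,b)$, so any edge joining $a$ or $b$ to a point of $S$ is strictly shorter than $e$; moreover $S\subseteq X(a,b)\cap X(b,a)$. Consequently, any local modification of $H$ that deletes $e$ and reconnects the cycle using only such short edges (plus at most one short ``bridge'' edge) strictly decreases $\WS{}$, by exactly the weight-sequence accounting of Theorem~\ref{biconnectivity-thr}. I would use two families of moves. First, a $2$-opt reversal: delete $e$ together with a second cycle edge $(u,u^+)$ and insert $(a,u)$ and $(b,u^+)$; by Observation~\ref{obs2} this is improving whenever $u\in X(a,b)$ and $u^+\in X(b,a)$. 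Second, a single-vertex relocation: remove some $c\in S$ from the cycle, bridge its two former neighbours by $(c^-,c^+)$, and reinsert $c$ between $a$ and $b$; this is improving provided $t(c^-,c^+)<t(a,b)$, since the two edges $(a,c),(c,b)$ are automatically short.

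It remains to show that when $|S|\ge 8$ at least one of these moves applies; this packing step is where the constant $7$ is fixed and is the main obstacle. If no move improves $H$, the $2$-opt criterion forces, for every $c\in S$, that the cycle-successor $c^+$ lies outside $X(b,a)$ and the predecessor $c^-$ lies outside $X(a,b)$, while the relocation criterion forces $t(c^-,c^+)\ge t(a,b)$ for every such $c$. I would normalise the picture with Observation~\ref{side-point-obs}, placing one of $a,b$ at a vertex of $t(a,b)$ and the other on the opposite side, and then partition the hexagons $X(a,b)$ and $X(b,a)$ into their six $60^\circ$ cones. Each forbidden neighbour is thereby confined to a bounded union of cone-regions lying outside the hexagons, at hexagonal distance at least $d(a,b)$ from $a$ (for predecessors) or from $b$ (for successors), and the far-apart-neighbours condition restricts how the $2|S|$ neighbours may be distributed among them. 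A careful counting/packing argument on these regions bounds by $7$ the number of interior points that can simultaneously block every available move, so $|S|\ge 8$ yields the contradiction. The bulk of the work is precisely this case analysis establishing the tight threshold; the matching negative result, that \kTD{5}{} can fail, would be given by a separate explicit point configuration.
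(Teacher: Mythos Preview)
Your high-level plan---take a Hamiltonian cycle $H$ with lexicographically minimal weight sequence, suppose some edge $e=(a,b)$ lies outside \kTD{7}{}, and derive a contradiction via exchange moves followed by a packing bound---is exactly the paper's strategy. The gap is in your menu of exchange moves. Both your 2-opt and your single-vertex relocation involve only \emph{one} interior point $c\in S$ at a time, so the blocked-move conditions you extract are purely per-$c$: $c^{+}\notin X(b,a)$, $c^{-}\notin X(a,b)$, and $t(c^{-},c^{+})\ge t(a,b)$ (even in their strengthened forms with the $t(c,c^{\pm})$ terms). Nothing you derive relates the successor of one interior point to the successor of another. Consequently there is no packing obstruction at all: for any $N$ one may place $c_1,\dots,c_N$ inside $t(a,b)$, cluster all the $c_i^{+}$ near a single location outside $X(b,a)$, cluster all the $c_i^{-}$ near a single location outside $X(a,b)$ and far (in the triangular sense) from the $c_i^{+}$, and every one of your blocked-move conditions is satisfied simultaneously. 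The ``careful counting/packing argument'' you defer therefore cannot be completed from these constraints.

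The paper's decisive extra move is a 3-opt involving \emph{two} interior points $u_j,u_k\in S$ at once: delete $(a,b),(u_j,r_j),(u_k,r_k)$ and insert $(a,u_j),(b,u_k),(r_j,r_k)$, where $r_j,r_k$ are the cycle-successors. Blocking this yields the \emph{pairwise} separation $t(r_j,r_k)\ge\max\{t(a,b),t(u_j,r_j),t(u_k,r_k)\}$ among successors. Together with the full-strength 2-opt condition $t(b,r_j)\ge\max\{t(a,b),t(u_j,r_j)\}$---note the $t(u_j,r_j)$ term, which your stated criterion $c^{+}\notin X(b,a)$ omits but which is needed to exclude three of the regions---the paper partitions the plane around $t(a,b)$ into twelve explicit regions and shows the successors $r_j$ can occupy at most seven of them, one point each. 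Your relocation move is not used, and predecessors are never analysed.
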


\begin{proof}
Let $\mathcal{H}$ be the set of all Hamiltonian cycles through the points of $P$. Define a total
order on the elements of $\mathcal{H}$ by their weight sequence. If two elements have exactly the same weight sequence, break ties arbitrarily to get a total order. 
Let $H^* = a_0, a_1,\dots, a_{n−1}$ be a cycle in $\mathcal{H}$ with minimal weight sequence. It is obvious that $H^*$ is a bottleneck Hamiltonian cycle of $P$. We will show that all the edges of $H^*$ are in \kTD{7}{}. Consider any edge $e = (a_i, a_{i+1})$ in $H^*$ and let $t(a_i,a_{i+1})$ be the triangle corresponding to $e$ (all index manipulations are modulo $n$).

{\em Claim 1}: None of the edges of $H^*$ can be completely inside $t(a_i,a_{i+1})$. Suppose there is an edge $f=(a_j, a_{j+1})$ inside $t(a_i,a_{i+1})$. Let $H$ be a cycle obtained from $H^*$ by deleting $e$ and $f$, and adding $(a_i, a_j)$ and $(a_{i+1}, a_{j+1})$. By Observation~\ref{obs1}, $t(a_i, a_{i+1}) > \max\{t(a_i, a_j), t(a_{i+1}, a_{j+1})\}$, and hence $\WS{H}<\WS{H^*}$. This contradicts the minimality of $H^*$.

Therefore, we may assume that no edge of $H^*$ lies completely inside $t(a_i,a_{i+1})$. Suppose there are $w$ points of $P$ inside $t(a_i,a_{i+1})$. Let $U = u_1, u_2,\dots, u_w$ represent these points indexed in the order we would encounter them on $H^*$ starting from $a_i$. Let $S = s_1, s_2,\dots, s_w$ and $R = r_1, r_2,\dots, r_w$ represent the vertices where $s_i$ is the vertex preceding $u_i$ on the cycle and $r_i$ is the vertex succeeding $u_i$ on the cycle.
Without loss of generality assume that $a_i\in C_{a_{i+1}}^4$, and $t(a_i,a_{i+1})$ is anchored at $a_{i+1}$, as shown in Figure~\ref{hamiltonicity-fig1}. 

{\em Claim 2}: For each $r_j\in R$, $t(r_j, a_{i+1}) \ge \max\{t(a_i, a_{i+1}), t(u_j, r_j)\}$. Suppose there is a point $r_j\in R$ such that $t(r_j, a_{i+1}) < \max\{t(a_i, a_{i+1}), t(u_j, r_j)\}$. Construct a new cycle $H$ by removing the edges $(u_j, r_j)$, $(a_i, a_{i+1})$ and adding the edges $(a_{i+1}, r_j)$ and $(a_i, u_j)$. Since the two new edges have length strictly less than $\max\{t(a_i, a_{i+1}), t(u_j, r_j)\}$, $\WS{H} < \WS{H^*}$; which is a contradiction.

{\em Claim 3}: For each pair $r_j$ and $r_k$ of points in $R$, $t(r_j, r_k)\ge \max\{t(a_i, a_{i+1}), t(u_j,r_j),\allowbreak t(u_k,r_k)\}$. Suppose there is a pair $r_j$ and $r_k$ such that $t(r_j, r_k)< \max\{t(a_i, a_{i+1}), t(u_j,r_j),\allowbreak d(u_k,r_k)\}$.  Construct a new cycle $H$ from $H^*$ by first deleting $(u_j,r_j)$, $(u_k,r_k)$,  $(a_i, a_{i+1})$. This results in three paths. One of the paths must contain both $a_i$ and either $r_j$ or $r_k$. W.l.o.g. suppose that $a_i$ and $t_k$ are on the same path. Add the edges $(a_i, u_j)$, $(a_{i+1}, u_k)$, $(r_j, r_k)$. Since $\max\{t(u_j,r_j),\allowbreak t(u_k,r_k),\allowbreak d(a_i, a_{i+1})\}>\max \{t(a_i, u_j),\allowbreak t(a_{i+1}, u_k),\allowbreak t(r_j, r_k)\}$, $\WS{H} <\allowbreak \WS{H^*}$; which is a contradiction.

\begin{figure}[htb]
  \centering
  \includegraphics[width=.6\columnwidth]{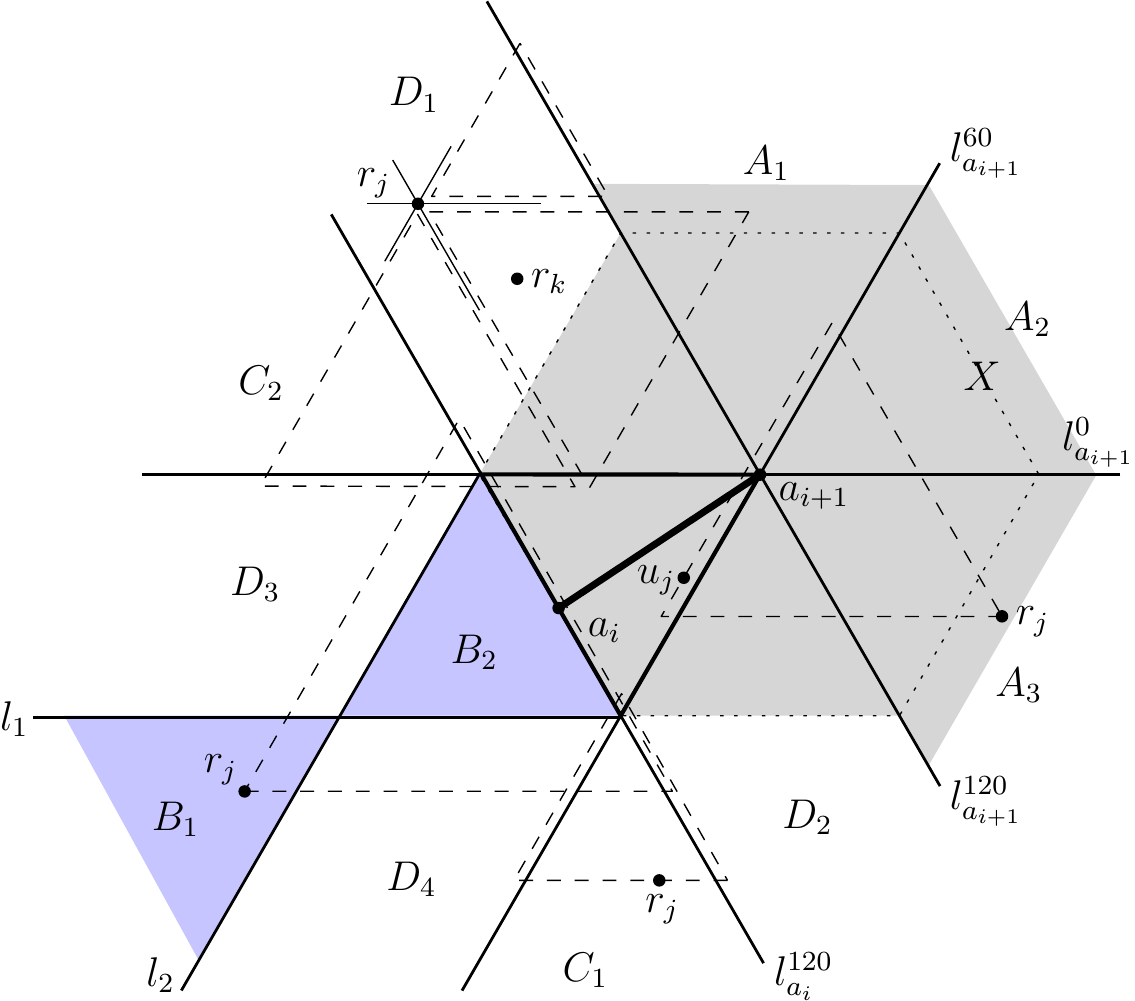}
 \caption{Illustration of Theorem~\ref{hamiltonicity-thr}.}
  \label{hamiltonicity-fig1}
\end{figure}

Now, we use Claim 2 and Claim 3 to show that the size of $R$ (and consequently $U$) is at most seven, i.e., $w\le 7$.
Consider the lines $l_{a_{i+1}}^0$, $l_{a_{i+1}}^{60}$, $l_{a_{i+1}}^{120}$, and $l_{a_{i}}^{120}$ as shown in Figure~\ref{hamiltonicity-fig1}. Let $l_1$ and $l_2$ be the rays starting at the corners of $t(a_i, a_{i+1})$ opposite to $a_{i+1}$ and parallel to $l_{a_{i+1}}^0$ and $l_{a_{i+1}}^{60}$ respectively, as shown in Figure~\ref{hamiltonicity-fig1}. These lines and rays, partition the plane into 12 regions. We will show that each of the regions $D_1$, $D_2$, $D_3$, $D_4$, $C_1$, $C_2$, and $B=B_1\cup B_2$ contains at most one point of $R$, and the other regions do not contain any point of $R$. Consider the hexagon $X(a_{i+1},a_i)$. By Claim 2 and Observation~\ref{obs2}, no point of $R$ can be inside $X(a_{i+1},a_i)$. Moreover, no point of $R$ can be inside the cones $A_1$, $A_2$, and $A_3$, because if $r_j\in \{A_1 \cup A_2\cup A_3\}$, the (upward) triangle $t'(u_j, r_j)$ contains $a_{i+1}$. Then by Observation~\ref{obs2}, $t(r_j, a_{i+1}) < t(u_j, r_j)$; which contradicts Claim 2.

Now we show that each of the regions $D_1$, $D_2$, $D_3$ and $D_4$ contains at most one point of $R$. Consider the region $D_1$; by similar reasoning we can prove this claim for $D_2$, $D_3$, and $D_4$. Using contradiction, let $r_j$ and $r_k$ be two points in $D_1$, and w.l.o.g. assume that $r_j$ is the farthest to $l_{a_{i+1}}^{60}$. Then $r_k$ can lie inside any of the cones $C_{r_j}^1$, $C_{r_j}^5$, and $C_{r_j}^6$ (but not in $X$). If $r_k \in C_{r_j}^1$, then $t'(r_j, r_k)$ is smaller than $t'(a_i, a_{i+1})$ which means that $t(r_j, r_k)< t(a_i,a_{i+1})$. If $r_k \in C_{r_j}^5$, then $t'(u_j,r_j)$ contains $r_k$, that is $t(r_j, r_k)< t(u_j, r_j)$. If $r_k \in C_{r_j}^6$, then $t(u_j,r_j)$ contains $r_k$, that is $t(r_j, r_k)< t(u_j, r_j)$. All cases contradict Claim 3. 

Now consider the region $C_1$ (or its symmetric region $C_2$) and by contradiction assume that it contains two points $r_j$ and $r_k$. Let $r_j$ be the farthest from $l_{a_{i+1}}^{0}$. It is obvious that the $t'(u_j, r_j)$ contains $r_k$, that is $t(r_j, r_k)< t(u_j, r_j)$; which contradicts Claim 3. 

Now consider the region $B=B_1\cup B_2$. If both $r_j$ and $r_k$ belong to $B_2$, then $t'(r_j, r_k)$ is smaller that $t(a_i, a_{i+1})$. If $r_j\in B_1$ and $r_k\in B_2$, then $t'(u_j,r_j)$ contains $r_k$, and hence $t(r_j, r_k)<t(u_j, r_j)$. If both $r_j$ and $r_k$ belong to $B_1$, let $r_j$ be the farthest from $l_{a_i}^{120}$. Clearly, $t(u_j, r_j)$ contains $r_k$ and hence $t(r_j, r_k)<t(u_j, r_j)$. All cases contradict Claim 3.

Therefore, any of the regions $D_1$, $D_2$, $D_3$, $D_4$, $C_1$, $C_2$, and $B=B_1\cup B_2$ contains at most one point of $R$. Thus, $w \le 7$, and $t(a_i, a_{i+1})$ contains at most 7 points of $P$. Therefore, $e=(a_i, a_{i+1})$ is an edge of \kTD{7}{}.
\end{proof}

As a direct consequence of Theorem~\ref{hamiltonicity-thr} we have shown that:
\begin{corollary}
 \kTD{7}{} is Hamiltonian.
\end{corollary}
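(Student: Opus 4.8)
The plan is to observe that this is an immediate specialization of Theorem~\ref{hamiltonicity-thr} combined with the definitions in Section~\ref{graph-notions}. Recall that a bottleneck Hamiltonian cycle of $P$ is defined to be a Hamiltonian cycle (taken among all Hamiltonian cycles through the points of $P$) whose maximum-weight edge is minimized; in particular, it \emph{is} a Hamiltonian cycle. Theorem~\ref{hamiltonicity-thr} asserts that for every point set $P$ there exists such a cycle all of whose edges lie in \kTD{7}{}.

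First I would note that, since every edge of this bottleneck Hamiltonian cycle is an edge of \kTD{7}{}, the cycle is a spanning subgraph of \kTD{7}{} that visits each vertex exactly once. Hence \kTD{7}{} contains a Hamiltonian cycle, which is precisely the assertion that \kTD{7}{} is Hamiltonian. Thus the proof reduces to dropping the optimality (bottleneck) qualifier from the conclusion of the theorem.

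There is essentially no obstacle here, since the corollary is a one-line consequence obtained by weakening Theorem~\ref{hamiltonicity-thr}. The only point worth a remark is the existence of a bottleneck Hamiltonian cycle of $P$ in the first place: this holds whenever $|P|\ge 3$, because the complete geometric graph on $P$ is Hamiltonian and its set of Hamiltonian cycles is finite and nonempty, so one with lexicographically minimal weight sequence \WS{\cdot} can be selected. With that cycle supplied by Theorem~\ref{hamiltonicity-thr} as a subgraph of \kTD{7}{}, the claim follows.
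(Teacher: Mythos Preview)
Your proposal is correct and matches the paper's own treatment: the paper states the corollary as a direct consequence of Theorem~\ref{hamiltonicity-thr}, with no further argument, and your write-up simply spells out why containing a bottleneck Hamiltonian cycle implies Hamiltonicity. The extra remark about existence for $|P|\ge 3$ is a harmless technicality not mentioned in the paper.
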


\begin{figure}[htb]
  \centering
  \includegraphics[width=.7\columnwidth]{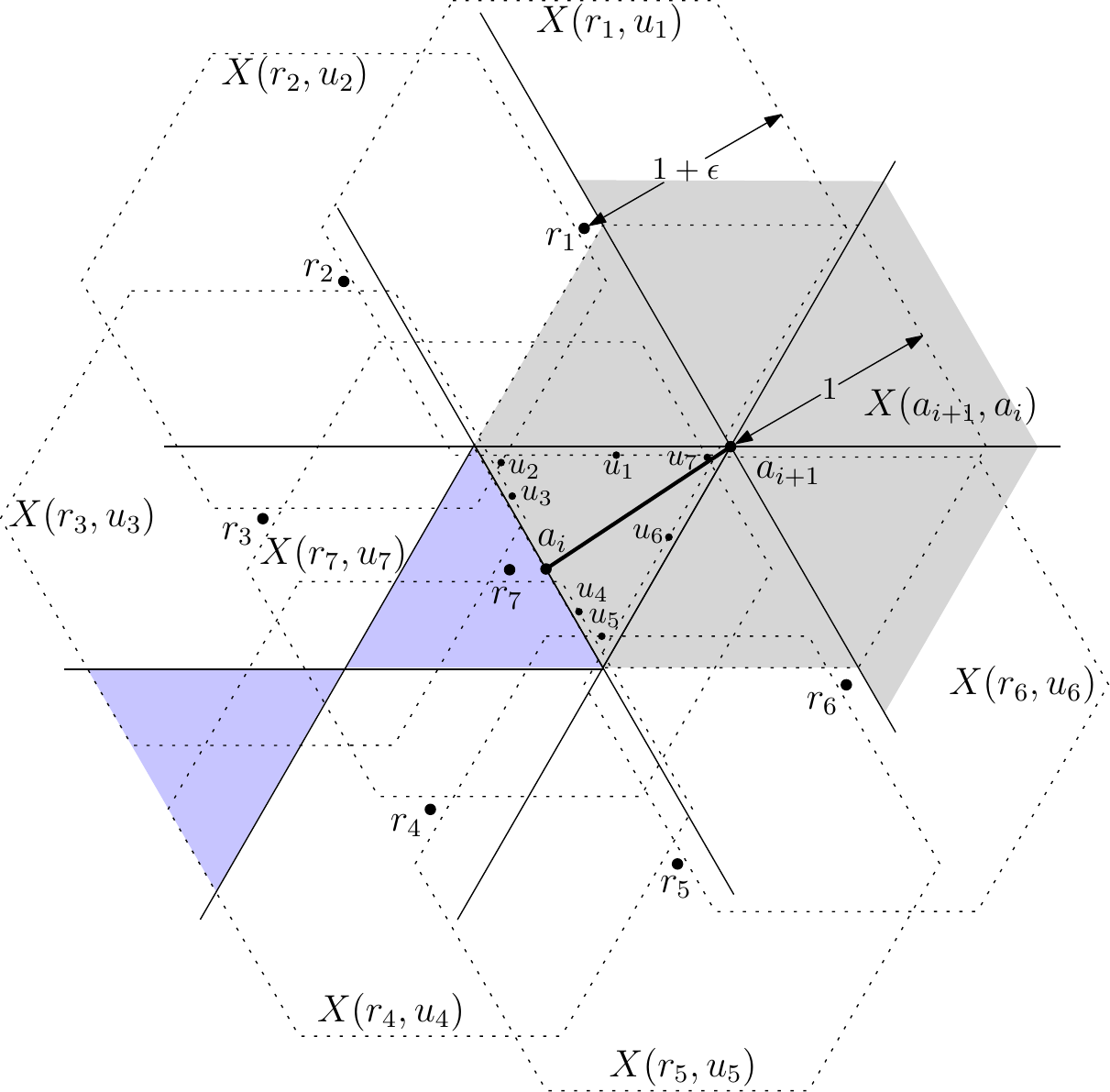}
 \caption{$t(a_i,a_{i+1})$ contains 7 points while the conditions in the proof of Theorem~\ref{hamiltonicity-thr} hold.}
  \label{hamiltonicity-fig2}
\end{figure}

An interesting question is to determine if \kTD{k}{} contains a bottleneck Hamiltonian cycle for $k<7$.
Figure~\ref{hamiltonicity-fig2} shows a configuration where $t(a_i, a_{i+1})$ contains 7 points while the conditions of Claim 1, Claim 2, and Claim 3 in the proof of Theorem~\ref{hamiltonicity-thr} hold. In Figure~\ref{hamiltonicity-fig2}, $d(a_i, a_{i+1})=1$, $d(r_i, u_i)= 1+\epsilon$, $d(r_i, r_j)>1+\epsilon$, $d(r_i, a_{i+1})> 1+\epsilon$ for $i,j=1,\dots 7$ and $i\neq j$.

\begin{figure}[htb]
  \centering
  \includegraphics[width=.8\columnwidth]{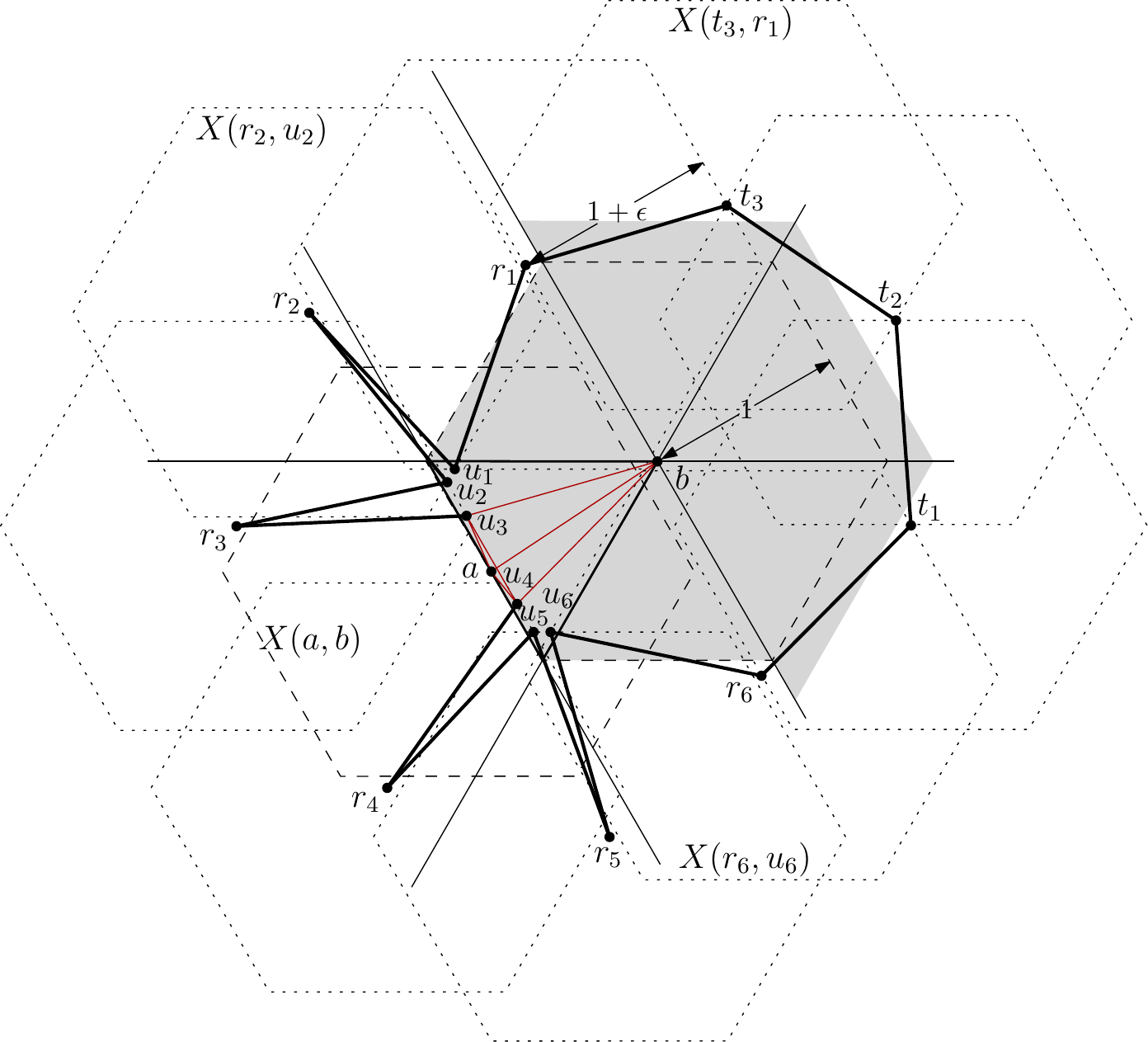}
 \caption{The points $\{r_1,\dots,r_6,t_1,t_2,t_3\}$ are connected to their first and second closest point (the bold edges). The edge $(a,b)$ should be in any bottleneck Hamiltonian cycle, while $t(a,b)$ contains 6 points.}
  \label{hamiltonicity-fig3}
\end{figure}

Figure~\ref{hamiltonicity-fig3} shows a configuration of $P$ with 17 points such that \kTD{5}{} does not contain a bottleneck Hamiltonian cycle. In Figure~\ref{hamiltonicity-fig3}, $d(a,b)=1$ and $t(a,b)$ contains 6 points $U=\{u_1, \dots, u_6\}$. In addition $d(r_i, u_i)= 1+\epsilon$, $d(r_i, r_j)>1+\epsilon$, $d(r_i, b)> 1+\epsilon$ for $i,j=1,\dots 6$ and $i\neq j$. Let $R=\{t_1, t_2,t_3, r_1,\dots, r_6\}$. The dashed hexagons are centered at $a$ and $b$ and have diameter 1. The dotted hexagons are centered at vertices in $R$ and have diameter $1+\epsilon$. Each point in $R$ is connected to its first and second closest points by edges of length $1+\epsilon$ (the bold edges). Let $B$ be the set of these edges. Let $H$ be a cycle formed by $B\cup\{(u_3,b),(b,a),(a,u_5)\}$, i.e., $H=(u_4,r_4,u_5,r_5,u_6,r_6,t_1,t_2,t_3,r_1,u_1,r_2,u_2,r_3,u_3,a,b,u_4)$. It is obvious that $H$ is a Hamiltonian cycle for $P$ and $\lambda(H)=1+\epsilon$. Thus, the bottleneck of any bottleneck Hamiltonian cycle for $P$ is at most $1+\epsilon$. We will show that any bottleneck Hamiltonian cycle for $P$ contains the edge $(a,b)$ which does not belong to \kTD{5}{}. By contradiction, let $H^*$ be a bottleneck Hamiltonian cycle which does not contain $(a,b)$. In $H^*$, $b$ is connected to two vertices $b_l$ and $b_r$, where $b_l\neq a$ and $b_r\neq a$. Since the distance between $b$ and any vertex in $R$ is strictly bigger than $1+\epsilon$ and $\lambda(H^*)\le 1+\epsilon$, $b_l\notin R$ and $b_r\notin R$. Thus $b_l$ and $b_r$ belong to $U$. Let $U'=\{u_1,u_2,u_5,u_6\}$. Consider two cases:

\begin{itemize}
 \item $b_l\in U'$ or $b_r\in U'$. W.l.o.g. assume that $b_l\in U'$ and $b_l=u_1$. Since $u_1$ is the first/second closest point of $r_1$ and $r_2$, in $H^*$ one of $r_1$ and $r_2$ must be connected by an edge $e$ to a point that is farther than its second closet point; $e$ has length strictly greater than $1+\epsilon$.
 \item $b_l\notin U'$ and $b_r\notin U'$. Thus, both $b_l$ and $b_r$ belong to $\{u_3,u_4\}$. That is, in $H^*$, $a$ should be connected to a point $c$ where $c\in R\cup U'$. If $c\in R$ then the edge $(a,c)$ has length more than $1+\epsilon$. If $c\in U'$, w.l.o.g. assume $c=u_1$; by the same argument as in the previous case, one of $r_1$ and $r_2$ must be connected by an edge $e$ to a point that is farther than its second closet point; $e$ has length strictly greater than $1+\epsilon$.
\end{itemize}

Since $e\in H^*$, both cases contradicts that $\lambda(H^*)\le 1+\epsilon$. Therefore, every bottleneck Hamiltonian cycle contains edge $(a,b)$. Since $(a,b)$ is not an edge in \kTD{5}{}, a bottleneck Hamiltonian cycle of $P$ is not contained in \kTD{5}{}.  

\section{Perfect Matching Admissibility}
\label{matching}
In this section we consider the matching problem in higher-order triangular-distance Delaunay graphs. In Subsection~\ref{bottleneck-matching-section} we show that \kTD{6}{} contains a bottleneck perfect matching. We also show that for some point sets $P$, \kTD{5}{} does not contain any bottleneck perfect matching. In Subsection~\ref{matching2} we prove that every \kTD{2}{} has a perfect matching when $P$ has an even number of points, and \kTD{1}{} contains a matching of size at least $\frac{2(n-1)}{5}$.

\subsection{Bottleneck Perfect Matching}
\label{bottleneck-matching-section}
\begin{theorem}
\label{matching-thr}
 For a set $P$ of an even number of points, \kTD{6}{} contains a bottleneck perfect matching.
\end{theorem}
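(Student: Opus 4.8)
The plan is to mirror the exchange argument of Theorem~\ref{hamiltonicity-thr}, but to exploit the extra freedom that a matching, unlike a Hamiltonian cycle, places no global connectivity constraint on how the affected vertices may be re-paired; this is precisely what will let us push the bound from $7$ down to $6$. First I would let $\mathcal{M}$ be the set of all perfect matchings of $P$ (nonempty since $n$ is even), order its elements by weight sequence with ties broken arbitrarily, and let $\MOPT$ be a perfect matching with minimal weight sequence. Then $\MOPT$ is a bottleneck perfect matching, so it suffices to show every edge of $\MOPT$ lies in \kTD{6}{}. Suppose not, and let $(a,b)\in\MOPT$ be an edge with $t(a,b)$ containing $w\ge 7$ points of $P$ in its interior; I would set up coordinates as in Figure~\ref{hamiltonicity-fig1}, assuming $a\in C_{b}^4$ with $t(a,b)$ anchored at $b$.

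The core consists of three exchange claims, each proved by re-pairing the affected vertices and invoking Observation~\ref{obs1} and Observation~\ref{obs2} to show the new edges are strictly shorter than the maximum of the deleted ones, contradicting the minimality of \WS{\MOPT}. \emph{Claim 1:} no edge of $\MOPT$ lies entirely inside $t(a,b)$, for if $(u_i,u_j)\in\MOPT$ were interior, then deleting $(a,b)$ and $(u_i,u_j)$ and adding $(a,u_i),(b,u_j)$ yields a smaller matching, since both new triangles are contained in $t(a,b)$. Hence every interior point $u_i$ is matched to a distinct exterior partner $r_i$, and $R=\{r_1,\dots,r_w\}$ has size $w$. \emph{Claim 2:} for every interior $u_i$, \emph{both} $t(a,r_i)\ge\max\{t(a,b),t(u_i,r_i)\}$ and $t(b,r_i)\ge\max\{t(a,b),t(u_i,r_i)\}$. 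This is the key point: deleting $(a,b)$ and $(u_i,r_i)$ leaves the four vertices $a,b,u_i,r_i$ free to be re-paired as either $(a,u_i),(b,r_i)$ or $(a,r_i),(b,u_i)$, and \emph{both} are legal perfect matchings, so both exchanges must fail to improve \WS{\MOPT}. In the Hamiltonicity proof only one reconnection preserves a single cycle, which is exactly why it obtains only the weaker one-sided Claim~2. \emph{Claim 3:} for every pair $u_j,u_k$, $t(r_j,r_k)\ge\max\{t(a,b),t(u_j,r_j),t(u_k,r_k)\}$, proved by deleting $(a,b),(u_j,r_j),(u_k,r_k)$ and re-pairing the six freed vertices as $(r_j,r_k),(a,u_j),(b,u_k)$.

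Finally I would run the region-counting argument on $R$. The two-sided Claim~2 forces each $r_i$ to lie outside \emph{both} hexagons $X(b,a)$ and $X(a,b)$ (by Observation~\ref{obs2}, since $t(a,r_i),t(b,r_i)\ge t(a,b)$), and outside the cones for which the upward triangle $t'(u_i,r_i)$ would contain $a$ or $b$; Claim~3 then admits at most one point of $R$ per surviving region. Because the extra forbidden hexagon $X(a,b)$ centred at $a$ eliminates one of the seven regions that remain in the Hamiltonicity analysis, the surviving regions total six, giving $w\le 6$ and contradicting $w\ge 7$. Thus $(a,b)\in$~\kTD{6}{}, completing the proof.

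The main obstacle is this last geometric step: verifying that the additional constraint $t(a,r_i)\ge\max\{t(a,b),t(u_i,r_i)\}$ genuinely removes one admissible region, rather than merely tightening the bookkeeping. Carrying this out requires re-examining the partition of the plane around $t(a,b)$ with both hexagon constraints in force and checking each region case by case, precisely as in the Hamiltonicity proof but now with $X(a,b)$ added to the excluded zone.
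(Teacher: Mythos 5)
Your proposal takes essentially the same route as the paper's proof: the same weight-sequence-minimal matching, the same Claim 1, the same crucial two-sided Claim 2 (which the paper states as $\min\{t(r_j,a_i),t(r_j,b_i)\}\ge\max\{t(a_i,b_i),t(u_j,r_j)\}$ and justifies exactly as you do, by the freedom to re-pair $r_j$ with either $a_i$ or $b_i$), the same Claim 3, and the same region count in which the second excluded hexagon $X(a_i,b_i)$ together with the upward-triangle argument (if $t'(r_j,u_j)$ contains $a_i$ then $t(r_j,a_i)<t(u_j,r_j)$) eliminates one of the seven Hamiltonicity regions, leaving six. The final case-by-case verification you flag as the remaining obstacle is handled in the paper exactly as you anticipate, by rerunning the region analysis of the Hamiltonicity theorem with the region $B$ now empty and with $D'_3\subset D_3$, $D'_4\subset D_4$.
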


\begin{proof}
Let $\mathcal{M}$ be the set of all perfect matchings through the points of $P$. Define a total order on the elements of $\mathcal{M}$ by their weight sequence. If two elements have exactly the same weight sequence, break ties arbitrarily to get a total order.
Let $M^* = \{(a_1, b_1),\dots, (a_{\frac{n}{2}}, b_{\frac{n}{2}})\}$ be a perfect matching in $\mathcal{M}$ with minimal weight sequence. It is obvious that $M^*$ is a bottleneck perfect matching for $P$. We will show that all edges of $M^*$ are in \kTD{6}{}. Consider any edge $e = (a_i, b_i)$ in $M^*$ and its corresponding triangle $t(a_i,b_i)$.

\begin{figure}[htb]
  \centering
  \includegraphics[width=.6\columnwidth]{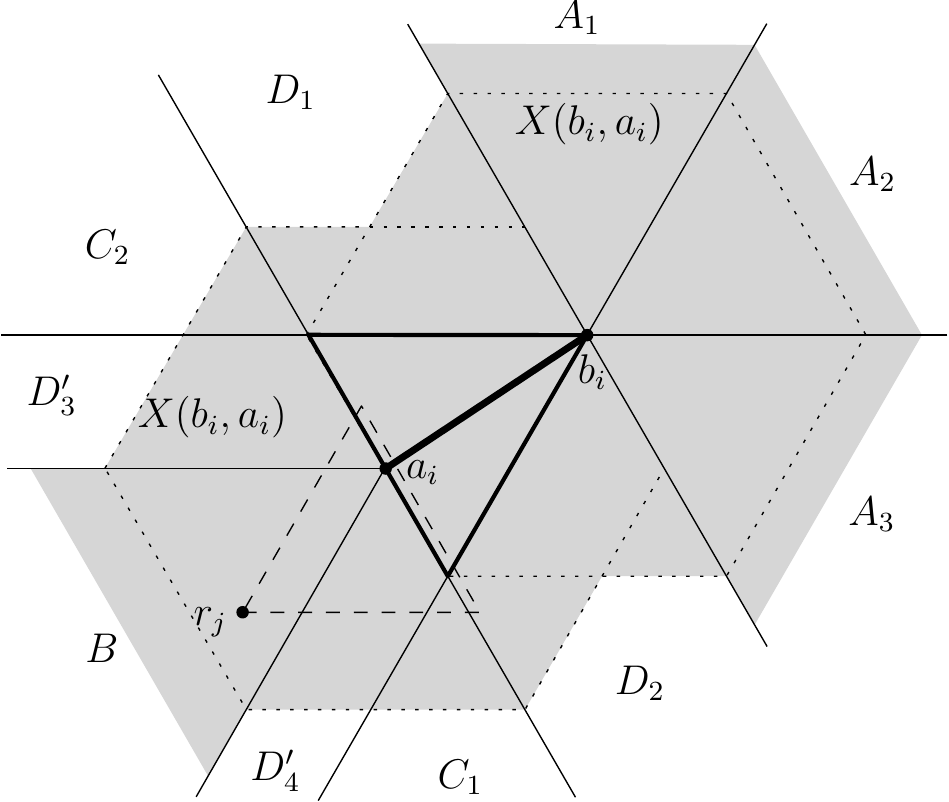}
 \caption{Proof of Theorem~\ref{matching-thr}.}
  \label{matching-fig1}
\end{figure}

{\em Claim 1}: None of the edges of $M^*$ can be inside $t(a_i,b_i)$. Suppose there is an edge $f=(a_j, b_j)$ inside $t(a_i,b_i)$. Let $M$ be a perfect matching obtained from $M^*$ by deleting $\{e, f\}$, and adding $\{(a_i, a_j), (b_i, b_j)\}$. By Observation~\ref{obs1}, the two new edges are smaller than the old ones. Thus, $\WS{M}<\WS{M^*}$ which contradicts the minimality of $M^*$.

Therefore, we may assume that no edge of $M^*$ lies completely inside $t(a_i,b_i)$. Suppose there are $w$ points of $P$ inside $t(a_i,b_i)$. Let $U = u_1, u_2,\dots, u_w$ represent the points inside $t(a_i, b_i)$, and $R=r_1, r_2,\dots, r_w$ represent the points where $(r_i,u_i)\in M^*$. W.l.o.g. assume that $a_i\in C^4_{b_i}$, and $t(a_i,b_i)$ is anchored at $b_i$ as shown in Figure~\ref{matching-fig1}.  

{\em Claim 2}: For each $r_j\in R$, $\min\{t(r_j, a_i), t(r_j, b_i)\} \ge \max\{t(a_i, b_i), t(u_j, r_j)\}$. By a similar argument as in the proof of Claim 2 in Theorem \ref{hamiltonicity-thr} we can either match $r_j$ with $a_i$ or $b_i$ to obtain a smaller matching $M$; which is a contradiction.

{\em Claim 3}: For each pair $r_j$ and $r_k$ of points in $R$, $t(r_j, r_k)\ge \max\{t(a_i, b_i), t(r_j, u_j), t(r_k, u_k)\}$. The proof is similar to the proof of Claim 3 in Theorem \ref{hamiltonicity-thr}.

Consider Figure~\ref{matching-fig1} which partitions the plane into eleven regions. As a direct consequence of Claim 2, the hexagons $X(b_i, a_i)$ and $X(a_i, b_i)$ do not contain any point of $R$. By a similar argument as in the proof of Theorem \ref{hamiltonicity-thr}, the regions $A_1$, $A_2$, $A_3$ do not contain any point of $R$. In addition, the region $B$ does not contain any point $r_j$ of $R$, because otherwise $t'(r_j,u_j)$ contains $a_i$, that is $t(r_j, a_i)< t(u_j, r_j)$ which contradicts Claim 2. As shown in the proof of Theorem \ref{hamiltonicity-thr} each of the regions $D_1$, $D_2$, $D'_3$, $D'_4$, $C_1$, and $C_2$ contains at most one point of $R$ (note that $D'_3\subset D_3$ and $D'_4\subset D_4$). Thus, $w \le 6$, and $t(a_i, b_i)$ contains at most 6 points of $P$. Therefore, $e=(a_i, b_i)$ is an edge of \kTD{6}{}.
\end{proof}

As a direct consequence of Theorem~\ref{matching-thr} we have shown that:
\begin{corollary}
 For a set $P$ of even number of points, \kTD{6}{} has a perfect matching.
\end{corollary}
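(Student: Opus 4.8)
The plan is to obtain this corollary immediately from Theorem~\ref{matching-thr}, since it is nothing more than a restatement of that theorem with the bottleneck qualifier dropped. First I would recall the definitions from Section~\ref{graph-notions}: a \emph{bottleneck perfect matching} of $P$ is a perfect matching of the complete geometric graph on $P$ whose maximum-weight edge is minimized; in particular it \emph{is} a perfect matching. Theorem~\ref{matching-thr} asserts that, for a set $P$ of an even number of points, some bottleneck perfect matching $M^*$ uses only edges that belong to \kTD{6}{}. Hence $M^*$ is a set of vertex-disjoint edges of \kTD{6}{} covering every point of $P$, which is exactly a perfect matching of \kTD{6}{}. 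That is the entire argument.

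The only facts that need to be in place are existence statements, and these are already supplied. Since $|P|$ is even, the set $\mathcal{M}$ of all perfect matchings of $P$ is finite and nonempty, so it possesses an element of lexicographically minimal weight sequence; this is precisely the $M^*$ produced in the proof of Theorem~\ref{matching-thr}, and that proof certifies that every edge of $M^*$ lies in \kTD{6}{}. No appeal to Tutte's theorem (Theorem~\ref{Tutte}) or the Tutte--Berge formula (Theorem~\ref{Berge}) is needed here; those tools are reserved for the stronger non-bottleneck result of Subsection~\ref{matching2}, where one must certify a perfect matching in \kTD{2}{} even though \kTD{2}{} need not contain a \emph{bottleneck} perfect matching at all.

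Because the corollary is a pure weakening of Theorem~\ref{matching-thr}, I expect no genuine obstacle: all of the geometric content—Claims~1--3 and the region-counting argument that caps the number of interior points of $t(a_i,b_i)$ at six—has already been carried out in proving the theorem. The single step worth stating explicitly is the definitional implication ``bottleneck perfect matching $\Rightarrow$ perfect matching,'' which is what lets containment of the former in \kTD{6}{} pass to containment of the latter, yielding the claim.
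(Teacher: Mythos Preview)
Your proposal is correct and matches the paper's approach exactly: the paper states the corollary immediately after Theorem~\ref{matching-thr} with the single remark ``As a direct consequence of Theorem~\ref{matching-thr} we have shown that,'' relying on precisely the implication you spell out, namely that a bottleneck perfect matching is in particular a perfect matching.
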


\begin{figure}[htb]
  \centering
  \includegraphics[width=.6\columnwidth]{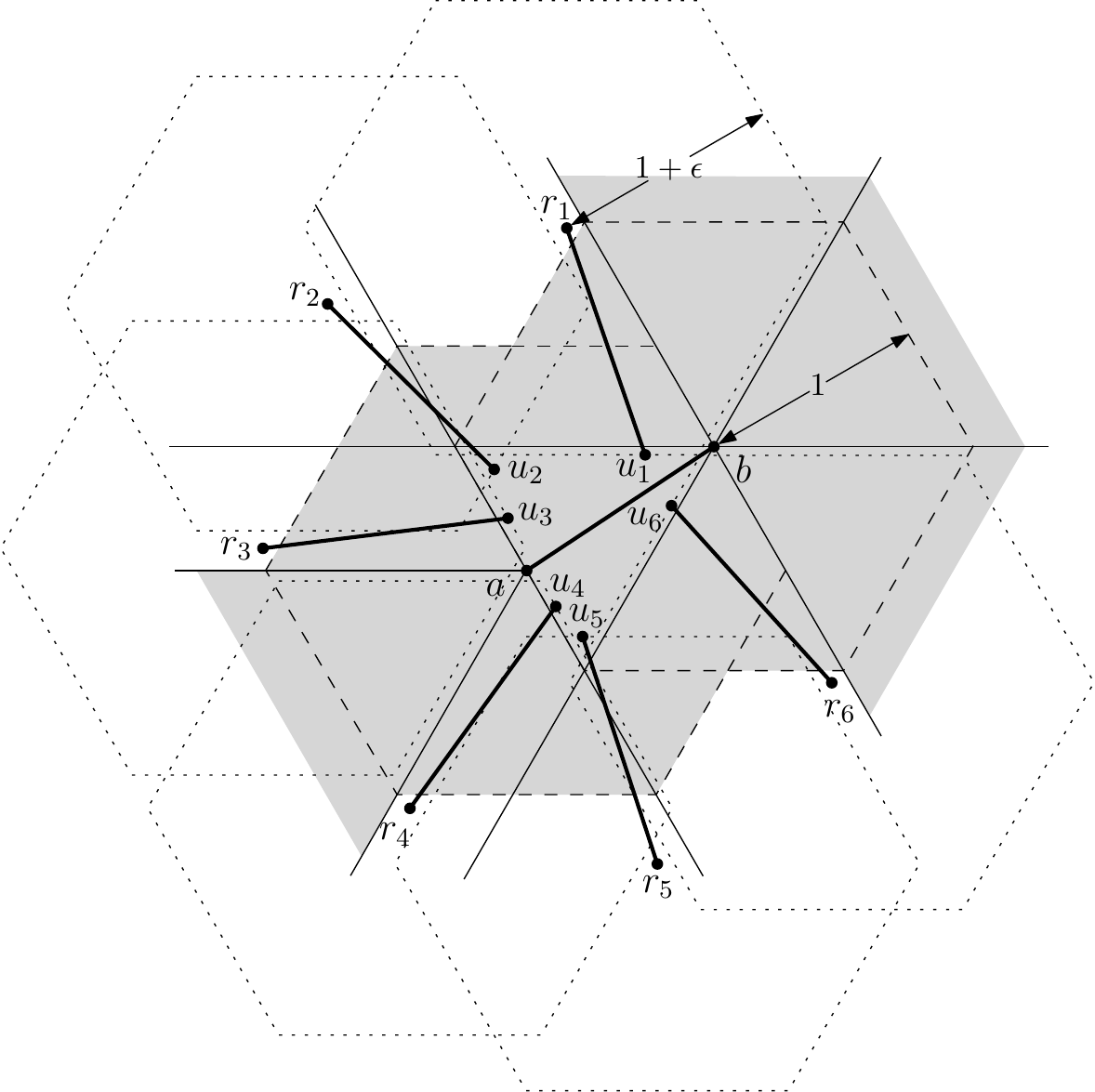}
 \caption{The points $\{r_1,\dots,r_6\}$ are matched to their closest point. The edge $(a, b)$ should be an edge in any bottleneck perfect matching, while $t(a, b)$ contains 6 points.}
  \label{matching-fig3}
\end{figure}

We show that the bound $k=6$ proved in Theorem~\ref{matching-thr} is tight. We will show that there are point sets $P$ such that \kTD{5}{} does not contain any bottleneck perfect matching.
Figure~\ref{matching-fig3} shows a configuration of $P$ with 14 points such that $d(a,b)=1$ and $t(a,b)$ contains six points $U=\{u_1, \dots, u_6\}$. In addition $d(r_i, u_i)= 1+\epsilon$, $d(r_i,x)>1+\epsilon$ where $x\neq u_i$, for $i=1,\dots 6$. Let $R=\{r_1,\dots, r_6\}$. In Figure~\ref{matching-fig3}, the dashed hexagons are centered at $a$ and $b$, each of diameter 1, and the dotted hexagons centered at vertices in $R$, each of diameter $1+\epsilon$. Consider a perfect matching $M=\{(a,b)\}\cup \{(r_i, u_i): i=1,\dots, 6\}$ where each point $r_i\in R$ is matched to its closest point $u_i$. It is obvious that $\lambda(M)=1+\epsilon$, and hence the bottleneck of any bottleneck perfect matching is at most $1+\epsilon$. We will show that any bottleneck perfect matching for $P$ contains the edge $(a,b)$ which does not belong to \kTD{5}{}. By contradiction, let $M^*$ be a bottleneck perfect matching which does not contain $(a,b)$. In $M^*$, $b$ is matched to a point $c\in R\cup U$. If $c \in R$, then $d(b,c)>1+\epsilon$. If $c\in U$, w.l.o.g. assume $c = u_1$. Thus, in $M^*$ the point $r_1$ is matched to a point $d$ where $d\neq u_1$. Since $u_1$ is the closest point to $r_1$ and $d(r_1,u_1)=1+\epsilon$, $d(r_1,d)>1+\epsilon$. Both cases contradicts that $\lambda(M^*)\le 1+\epsilon$. Therefore, every bottleneck perfect matching contains $(a,b)$. Since $(a,b)$ is not an edge in \kTD{5}{}, a bottleneck perfect matching of $P$ is not contained in \kTD{5}{}.  

\subsection{Perfect Matching}
\label{matching2}
In \cite{Babu2013} the authors proved a tight lower bound of $\lceil\frac{n-1}{3}\rceil$ on the size of a maximum matching in \kTD{0}{}. In this section we prove that \kTD{1}{} has a matching of size $\frac{2(n-1)}{5}$ and \kTD{2}{} has a perfect matching when $P$ has an even number of points.

\begin{figure}[htb]
  \centering
\setlength{\tabcolsep}{0in}
  $\begin{tabular}{cc}
 \multicolumn{1}{m{.5\columnwidth}}{\centering\includegraphics[width=.35\columnwidth]{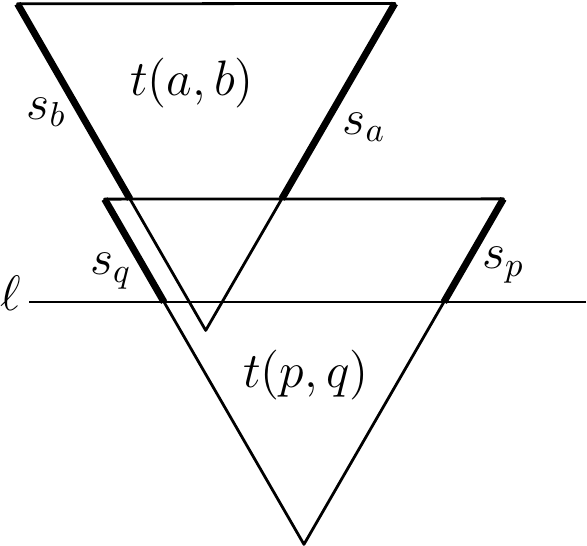}}
&\multicolumn{1}{m{.5\columnwidth}}{\centering\includegraphics[width=.35\columnwidth]{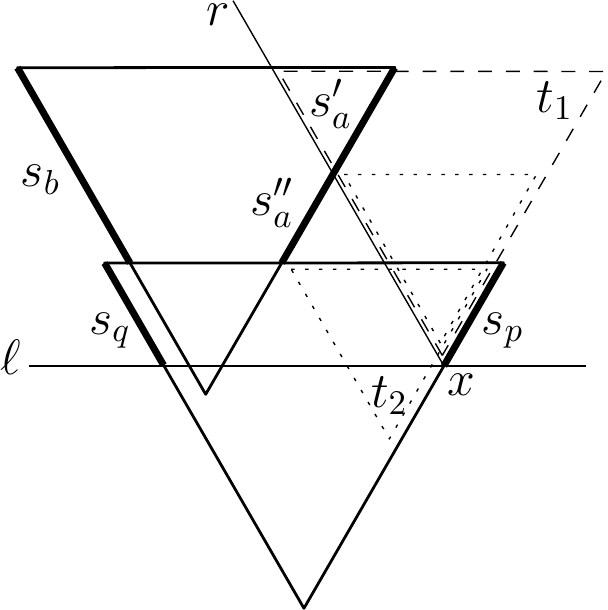}}\\
(a) & (b)
\end{tabular}$
  \caption{(a) Illustration of Lemma~\ref{triangle3}, and (b) proof of Lemma~\ref{triangle3}.}
\label{intersection-fig}
\end{figure}

For a triangle $t(a,b)$ through the points $a$ and $b$, let $top(a,b)$, $left(a,b)$, and $right(a,b)$ respectively denote the top, left, and right sides of $t(a,b)$. Refer to Figure~\ref{intersection-fig}(a) for the following lemma.
\begin{lemma}
\label{triangle3}
 Let $t(a,b)$ and $t(p,q)$ intersect a horizontal line $\ell$, and $t(a,b)$ intersects $top(p,q)$ in such a way that $t(p,q)$ contains the lowest corner of $t(a,b)$. If $a$ and $b$ lie above $top(p,q)$, and $p$ and $q$ lie above $\ell$, then, $\max\{t(a,p), t(b,q)\} < \max\{t(a,b), t(p,q)\}$.
\end{lemma}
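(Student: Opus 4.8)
The plan is to convert every area comparison into a comparison of three linear ``side positions,'' so that all four hypotheses become coordinatewise inequalities. A homothet $T$ of $\ConvexShape$ is the intersection of three half-planes bounded by its top, left, and right sides; let $h_1(T),h_2(T),h_3(T)$ denote the signed positions of these sides along their outer normals. The height of $T$ equals $h_1(T)+h_2(T)+h_3(T)$, and since all our triangles are homothets of $\ConvexShape$, the area is a strictly increasing function of this sum; hence $t(u,v)<t(r,s)$ iff $\sum_i h_i(t(u,v))<\sum_i h_i(t(r,s))$. For a point $x$ write $g_1(x),g_2(x),g_3(x)$ for its coordinates along the same three normals; these always satisfy $g_1(x)+g_2(x)+g_3(x)=0$, and the smallest downward triangle through $u$ and $v$ has $h_i(t(u,v))=\max\{g_i(u),g_i(v)\}$, each side being realized by whichever of $u,v$ is extreme in that direction. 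So the lemma reduces to an inequality between sums of pairwise maxima.

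Next I would read the hypotheses into this language using the labeling of Figure~\ref{intersection-fig}(a), in which $a,p$ are the left points and $b,q$ the right points; thus $a$ realizes the left side of $t(a,b)$ and $b$ its right side, while $p$ realizes the left side of $t(p,q)$ and $q$ its right side, and since neither $p$ nor $q$ can be the apex, one of them is a top vertex and the other lies on the opposite side. Then ``$a,b$ above $top(p,q)$'' becomes $g_1(a),g_1(b)>\max\{g_1(p),g_1(q)\}$; and ``$t(p,q)$ contains the lowest corner of $t(a,b)$'' concerns the apex of $t(a,b)$, whose coordinates are $-(\max\{g_2(a),g_2(b)\}+\max\{g_3(a),g_3(b)\})$, $\max\{g_2(a),g_2(b)\}$, $\max\{g_3(a),g_3(b)\}$, so its containment yields $\max\{g_2(a),g_2(b)\}\le g_2(p)$, $\max\{g_3(a),g_3(b)\}\le g_3(q)$, and $\max\{g_2(a),g_2(b)\}+\max\{g_3(a),g_3(b)\}\ge-\max\{g_1(p),g_1(q)\}$. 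Finally, both triangles meeting $\ell$ while $p,q$ lie above $\ell$ forces that apex below $\ell$ and hence below $p$ and $q$, i.e. $\min\{g_1(p),g_1(q)\}>-(\max\{g_2(a),g_2(b)\}+\max\{g_3(a),g_3(b)\})$.

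Then I would bound the two new edges separately. For $t(a,p)=\sum_i\max\{g_i(a),g_i(p)\}$ the first hypothesis pins the top term to $g_1(a)$, and apex‑containment pins the left term to $g_2(p)$ (since $g_2(a)\le\max\{g_2(a),g_2(b)\}\le g_2(p)$); the right term $\max\{g_3(a),g_3(p)\}$ splits the argument. If $g_3(a)$ wins, substituting $g_3(a)=-g_1(a)-g_2(a)$ telescopes the sum to $g_2(p)-g_2(a)$, and the two apex inequalities give $t(a,p)<t(p,q)$. If $g_3(p)$ wins, substituting $g_3(p)=-g_1(p)-g_2(p)$ leaves the sum $g_1(a)-g_1(p)$, which I then compare with $t(a,b)$. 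The bound for $t(b,q)$ is the exact mirror, pinning the top term to $g_1(b)$, the right term to $g_3(q)$, and splitting on the left term, producing either $g_3(q)-g_3(b)<t(p,q)$ or a residual $g_1(b)-g_1(q)$ to compare with $t(a,b)$.

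The step I expect to be the obstacle is precisely the residual sub‑case — the edge joining to whichever of $p,q$ lies on a side rather than at a top vertex. There the telescoped value is of the form $g_1(a)-g_1(p)$ (or its mirror $g_1(b)-g_1(q)$), and bounding it by $t(a,b)=\max\{g_1(a),g_1(b)\}+g_2(a)+g_3(b)$ reduces, after using $\max\{g_1(a),g_1(b)\}\ge g_1(a)$, to $g_1(p)>-(\max\{g_2(a),g_2(b)\}+\max\{g_3(a),g_3(b)\})$, i.e. that the side‑point lies above the apex of $t(a,b)$. Apex‑containment alone only delivers this when the point is a top vertex; for the side‑point it is supplied exactly by the line $\ell$, which is otherwise unused, and this is why $\ell$ appears in the statement. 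The remaining work is bookkeeping: confirming from the figure which point realizes each side so that the two ``pinned'' coordinates are as claimed, and noting that strictness throughout follows from the strict hypotheses that $a,b$ lie strictly above $top(p,q)$ and that the apex lies strictly inside $t(p,q)$.
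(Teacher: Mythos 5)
Your proposal is correct, but it reaches the lemma by a genuinely different route than the paper. The paper argues synthetically: it fixes the two triangles, parametrizes the admissible positions of $a,b,p,q$ by boundary segments $s_a,s_b,s_p,s_q$, splits on which side of a ray through $x=\ell\cap right(p,q)$ the point $a$ lies, and in each case exhibits an explicit auxiliary downward triangle (one anchored at $x$ with top side on the line of $top(a,b)$, the other slid inside $t(p,q)$ along $right(p,q)$) containing both endpoints of the new edge, yielding $t(a,p)<t(a,b)$ in one case and $t(a,p)<t(p,q)$ in the other. You instead algebraize: expressing the scale of the smallest homothet as $\sum_i\max\{g_i(\cdot),g_i(\cdot)\}$ in the three normal coordinates turns every hypothesis into a linear inequality, and I verified your computations go through — your case split on the unpinned maximum exactly parallels the paper's two cases, bounding by $t(p,q)$ when the inner point wins and by $t(a,b)$ when the outer point wins, with the line $\ell$ entering only in the latter case, which is precisely the role $\ell$ plays in the paper's bound $t_1<t(a,b)$. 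What your approach buys is independence from the picture in the metric part of the argument and an explicit accounting of which hypothesis is used where; what it shares with the paper is that the combinatorial labeling still comes from the figure: both proofs need the same-side pairing ($a$ with $p$ and $b$ with $q$ on corresponding sides), and the statement is actually false for the crossed pairing, so your explicit flagging of this is appropriate. (Your reading makes $a,p$ the left points whereas the paper's figure has $s_a\subseteq right(a,b)$ and $s_p\subseteq right(p,q)$, but the two readings are mirror images, so nothing is lost.) Your remarks on strictness — that it comes from $a,b$ lying strictly above $top(p,q)$ and the lowest corner lying strictly inside $t(p,q)$ — are likewise consistent with how the paper's proof uses the configuration.
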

\begin{proof}
Recall that $t(a,b)$ is the smallest downward triangle through $a$ and $b$. By Observation~\ref{side-point-obs} each side of $t(a,b)$ contains either $a$ or $b$. 
In Figure~\ref{intersection-fig}(a) the set of potential positions for point $a$ on the boundary of $t(a,b)$ is shown by the line segment $s_a$; and similarly by $s_b$, $s_p$, $s_q$ for $b$, $p$, $q$, respectively. We will show that $t(a,p)<\max\{t(a,b), t(p,q)\}$. By similar reasoning we can show that $t(b,q)<\max\{t(a,b), t(p,q)\}$. Let $x$ denote the intersection of $\ell$ and $right(p,q)$. Consider a ray $r$ initiated at $x$ and parallel to $left(p,q)$ which divides $s_a$ into (at most) two parts $s'_a$ and $s''_a$ as shown in Figure~\ref{intersection-fig}(b). Two cases may appear:

\begin{itemize}
 \item $a\in s'_a$. Let $t_1$ be a downward triangle anchored at $x$ which has its $top$ side on the line through $top(a,b)$ (the dashed triangle in Figure~\ref{intersection-fig}(b)). The top side of $t_1$ and $t(a,b)$ lie on the same horizontal line. The bottommost corner of $t_1$ is on $\ell$ while the bottommost corner of $t(a,b)$ is below $\ell$. Thus, $t_1<t(a,b)$. In addition, $t_1$ contains $s'_a$ and $s_p$, thus, for any two points $a\in s'_a$ and $p\in s_p$, $t(a,p)\le t_1$. Therefore, $t(a,p)< t(a,b)$.
 \item $a \in s''_a$. Let $t_2$ be a downward triangle anchored at the intersection of $right(a,b)$ and $top(p,q)$ which has one side on the line through $right(p,q)$ (the dotted triangle in Figure~\ref{intersection-fig}(b)). This triangle is contained in $t(p,q)$, and has $s_p$ on its right side. If we slide $t_2$ upward while its top-left corner remains on $s''_a$, the segment $s_p$ remains on the right side of $t_2$. Thus, any triangle connecting a point $a\in s''_a$ to a point $p\in s_p$ has the same size as $t_2$. That is, $t(a,p)=t_2<t(p,q)$. 
\end{itemize}

Therefore, we have $t(a,p)<\max\{t(a,b), t(p,q)\}$. By similar argument we conclude that $t(b,q)<\max\{t(a,b), t(p,q)\}$.  
\end{proof}
Let $\mathcal{P}=\{P_1, P_2,\dots\}$ be a partition of the points in $P$.
Let $G(\mathcal{P})$ be a complete graph with vertex set $\mathcal{P}$. For each edge $e=(P_i,P_j)$ in $G(\mathcal{P})$, let $w(e)$ be equal to the area of the smallest triangle between a point in $P_i$ and a point in $P_j$, i.e. $w(e)=\min\{t(a,b):a\in P_i, b\in P_j\}$. That is, the weight of an edge $e\in G(\mathcal{P})$ corresponds to the size of the smallest triangle $t(e)$ defined by the endpoints of $e$. Let $\mathcal{T}$ be a minimum spanning tree of $G(\mathcal{P})$. Let $T$ be the set of triangles corresponding to the edges of $\mathcal{T}$, i.e. $T=\{t(e): e\in \mathcal{T}\}$. 

\begin{lemma}
 \label{empty-triangle-lemma}
The interior of any triangle in $T$ does not contain any point of $P$.
\end{lemma}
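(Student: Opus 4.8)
The plan is to argue by contradiction using the cut (exchange) property of minimum spanning trees together with Observation~\ref{obs1}. Suppose some triangle in $T$, say $t(e)=t(a,b)$ with $e=(P_i,P_j)\in\mathcal{T}$ and $a\in P_i$, $b\in P_j$ realizing $w(e)=\min\{t(a',b'):a'\in P_i,\ b'\in P_j\}$, contains a point $c\in P$ in its interior. By Observation~\ref{obs1} and its consequence, both $t(a,c)$ and $t(c,b)$ are contained in $t(a,b)$, so $\max\{t(a,c),t(c,b)\}<t(a,b)=w(e)$.

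Next I would split into cases according to the block of the partition containing $c$. If $c\in P_i$, then $t(c,b)$ is a triangle between a point of $P_i$ and a point of $P_j$ whose size is strictly smaller than $w(e)$, contradicting the definition of $w(e)$ as the smallest such triangle; the case $c\in P_j$ is symmetric, using $t(a,c)$. Hence the only surviving possibility is $c\in P_l$ for some block $P_l$ with $l\neq i,j$.

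For this main case I would invoke the fact that $e$ is an edge of the minimum spanning tree $\mathcal{T}$. Deleting $e$ splits $\mathcal{T}$ into two subtrees, one containing the vertex $P_i$ and the other containing $P_j$; the vertex $P_l$ lies in exactly one of them. If $P_l$ is in the component of $P_i$, then the edge $(P_l,P_j)$ of $G(\mathcal{P})$ crosses the corresponding cut and has weight $w(P_l,P_j)\le t(c,b)<w(e)$, so adding it to $\mathcal{T}-\{e\}$ yields a spanning tree of strictly smaller total weight. If instead $P_l$ is in the component of $P_j$, the edge $(P_i,P_l)$ crosses the cut and has weight $w(P_i,P_l)\le t(a,c)<w(e)$, again giving a lighter spanning tree. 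Either way this contradicts the minimality of $\mathcal{T}$, so no interior point $c$ can exist and every triangle in $T$ is empty.

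I expect the only subtle point to be the bookkeeping of the inequalities: one must check that each candidate replacement triangle is genuinely realized by two points lying in two \emph{distinct} blocks (so that its block-weight $w(\cdot)$ is bounded above by that triangle's size), and that the strict inequalities furnished by Observation~\ref{obs1} propagate to a strict decrease in the tree's weight. These verifications are routine, so the argument is essentially the standard emptiness proof for minimum-spanning-tree edges, transported to the convex triangular distance.
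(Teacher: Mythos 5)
Your proof is correct and takes essentially the same route as the paper: the same contradiction setup, the same appeal to Observation~\ref{obs1} to get $\max\{t(a,c),t(c,b)\}<t(a,b)$, and the same three-case analysis on which block of $\mathcal{P}$ contains $c$, with cases (i) and (ii) contradicting the definition of $w(e)$. The only difference is that in case (iii) you spell out the minimum-spanning-tree cut/exchange argument explicitly, whereas the paper compresses this into the bare assertion that having both smaller cross-block triangles ``contradicts that $e$ is an edge in $\mathcal{T}$''; your version supplies the detail the paper leaves implicit.
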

\begin{proof}
  By contradiction, suppose there is a triangle $\tau\in T$ which contains a point $c\in P$. Let $e=(P_i,P_j)$ be the edge in $\mathcal{T}$ which corresponds to $\tau$. Let $a$ and $b$ respectively be the points in $P_i$ and $P_j$ which define $\tau$, i.e. $\tau=t(a,b)$ and $w(e)=t(a,b)$. Three cases arise: (i) $c\in P_i$, (ii) $c\in P_j$, (iii) $c\in P_l$ where $l\neq i$ and $l\neq j$. In case (i) the triangle $t(c,b)$ between $c\in P_i$ and $b\in P_j$ is smaller than $t(a,b)$; contradicts that $w(e)=t(a,b)$ in $G(\mathcal{P})$.  In case (ii) the triangle $t(a,c)$ between $a\in P_i$ and $c\in P_j$ is smaller than $t(a,b)$; contradicts that $w(e)=t(a,b)$ in $G(\mathcal{P})$. In case (iii) the triangle $t(a,c)$ (resp. $t(c,b)$) between $P_i$ and $P_l$ (resp. $P_l$ and $P_j$) is smaller than $t(a,b)$; contradicts that $e$ is an edge in $\mathcal{T}$. 
\end{proof}

\begin{lemma}
 \label{intersection-lemma}
Each point in the plane can be in the interior of at most three triangles in $T$. 
\end{lemma}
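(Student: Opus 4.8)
The plan is to argue by contradiction: assume some point $z$ lies in the interior of four triangles $\tau_1,\tau_2,\tau_3,\tau_4\in T$, where $\tau_\ell=t(a_\ell,b_\ell)$ is the triangle of the spanning-tree edge $e_\ell=(P_{i_\ell},P_{j_\ell})$ of $\mathcal{T}$, with $a_\ell\in P_{i_\ell}$ and $b_\ell\in P_{j_\ell}$ (recall that $\tau_\ell$ realizes the weight $w(e_\ell)$, so $a_\ell,b_\ell$ are the closest pair across the two groups). The goal is to locate two of these triangles that sit in the configuration of Lemma~\ref{triangle3}, recombine their defining points into two strictly smaller triangles, and thereby build a spanning tree of $G(\mathcal{P})$ of smaller weight, contradicting the minimality of $\mathcal{T}$. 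First I would record the consequence of Lemma~\ref{empty-triangle-lemma}: since every $\tau_\ell$ has empty interior and each $a_\ell,b_\ell\in P$, no defining point of one triangle lies in the interior of another; all eight points sit on the boundaries. Combined with the fact that all $\tau_\ell$ are downward homothets of $\ConvexShape$ sharing the interior point $z$, this forces a rigid relative position.

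The combinatorial heart is a trichotomy. Writing each downward triangle as the intersection of the three half-planes bounded by its top, left, and right sides, two such triangles are either nested or exactly one of them protrudes through exactly one side of the other (the side corresponding to the unique side-coordinate in which it is the larger). Emptiness rules out proper nesting of two triangles of $T$ sharing $z$, so every pair among $\tau_1,\dots,\tau_4$ has a well-defined protrusion side, one of top, lower-left, lower-right. I would classify each $\tau_\ell$ by the direction of its defining vertex (by Observation~\ref{side-point-obs} one of $a_\ell,b_\ell$ sits at a vertex of $\tau_\ell$, and, since $z$ is interior, that vertex lies in a fixed cone of $z$), giving three types; by pigeonhole two of the four triangles share a type. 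The main obstacle, and the step I would spend the most care on, is to show that for such a same-type pair the protruding triangle has \emph{both} of its defining points lying beyond the protruded side of the other triangle; here emptiness is essential, since a defining point of the protruding triangle cannot fall in the (empty) interior of the other one. Once this is established, a $120^\circ$ rotation carries the pair to the top-side situation of Lemma~\ref{triangle3}, whose hypotheses (the lowest corner of the protruding triangle lying inside the other, both of its defining points above the protruded side, and a common horizontal transversal $\ell$) are then met; applying it yields $\max\{t(a,p),t(b,q)\}<\max\{t(a,b),t(p,q)\}$ for the triangles $t(a,b)$ and $t(p,q)$ of the pair.

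Finally I would cash in this inequality against $\mathcal{T}$. Let $e=(P_i,P_j)$ and $e'=(P_{i'},P_{j'})$ be the two tree edges of the pair, with $a\in P_i$, $b\in P_j$, $p\in P_{i'}$, $q\in P_{j'}$, and suppose without loss of generality that $t(p,q)\ge t(a,b)$, so $e'$ is the heavier edge. Deleting $e'$ from $\mathcal{T}$ splits it into components $A\ni P_{i'}$ and $B\ni P_{j'}$; since $e$ is a tree edge distinct from $e'$, its endpoints $P_i,P_j$ lie on the same side of this cut. If both lie in $A$, then the edge $(P_j,P_{j'})$, whose weight is at most $t(b,q)<t(p,q)=w(e')$, crosses the cut; if both lie in $B$, then $(P_i,P_{i'})$, of weight at most $t(a,p)<w(e')$, crosses it. In either case replacing $e'$ by this crossing edge produces a spanning tree of $G(\mathcal{P})$ of strictly smaller total weight, contradicting the minimality of $\mathcal{T}$. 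Hence no point lies in four triangles of $T$, proving the lemma; degenerate ties (coincident defining points, or triangles tangent through a shared vertex) are absorbed by performing the exchange after a fixed tie-breaking of equal weights.
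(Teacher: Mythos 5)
Your setup matches the paper's proof in its outer structure (emptiness of the triangles via Lemma~\ref{empty-triangle-lemma}, the protrusion trichotomy for two overlapping empty triangles, and the final tree-exchange, which you actually carry out more carefully than the paper does), but the middle step has a genuine gap: a same-type pair of triangles does \emph{not} in general satisfy the hypotheses of Lemma~\ref{triangle3}, and in fact the conclusion of that lemma can be outright false for such a pair. Lemma~\ref{triangle3} needs a horizontal transversal $\ell$ meeting both triangles with \emph{both} defining points $p,q$ of the protruded triangle above $\ell$ (and its proof also uses that the lowest corner of the protruding triangle is below $\ell$). Two triangles alone need not admit such an $\ell$: by Observation~\ref{side-point-obs} only one defining point of the protruded triangle sits at a corner; the other lies on the opposite side and can hang arbitrarily close to the bottom corner, far below the lowest point of the protruding triangle. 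Concretely, let $\tau_1=t(p,q)$ have corners $(0,0)$, $(1,0)$, $(0.5,-\sqrt{3}/2)$, with $p=(1,0)$ at its upper-right corner and $q=(0.49,-0.849)$ on its left side near the bottom corner; let $\tau_2=t(a,b)$ have side length $2$ and bottom corner $(0.5,-0.05)$, with $a=(1.5,1.682)$ at its upper-right corner and $b=(-0.106,1)$ on its left side. Both triangles have their corner-defining point at the upper-right corner (same type), they share interior points near $(0.5,-0.04)$, each is empty of the other's defining points, and $\tau_2$ protrudes through $top(\tau_1)$ with $a,b$ above $top(\tau_1)$ --- so all the conditions you establish hold. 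Yet any horizontal line lying below $q$ lies entirely below $\tau_2$, so no admissible $\ell$ exists; and the conclusion fails too: $\max\{t(a,b),t(p,q)\}$ has side length $2$, while any pairing that matches $q$ with $a$ or with $b$ yields a triangle of height at least $1.849$, i.e.\ side length at least $2(1.849)/\sqrt{3}\approx 2.13>2$. Hence neither $\max\{t(a,p),t(b,q)\}$ nor $\max\{t(a,q),t(b,p)\}$ beats $\max\{t(a,b),t(p,q)\}$, and your exchange step has no edge with which to replace the heavier tree edge.

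This is exactly why the paper's pigeonhole differs from yours: it is taken over the protrusion relations themselves, not over corner types, and with four triangles it extracts either a same-direction chain $t_i\prec_d t_j\prec_d t_k$ of \emph{three} triangles, or one triangle protruded on all three of its sides. The third triangle is not a convenience but the missing ingredient. In the chain, $\ell$ is taken to be the top side of the innermost triangle $t_i$; emptiness of $t_i$ forces the defining points of the middle triangle $t_j$ above $\ell$, and the common interior point (which lies inside $t_i$, hence below $\ell$, and also inside the outer triangle $t_k$) forces the lowest corner of $t_k$ below $\ell$. Those are precisely the hypotheses of Lemma~\ref{triangle3} that your two-triangle configuration cannot guarantee --- and, as the example above shows, cannot be dispensed with. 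To repair your proof you would need to replace the corner-type pigeonhole by an argument extracting one of these three-triangle (or all-three-sides) configurations, which is what the paper does.
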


\begin{proof}
For each $t(a,b)\in T$, the sides $top(a,b)$, $right(a,b)$, and $left(a,b)$ contains at least one of $a$ and $b$. In addition, by Lemma~\ref{empty-triangle-lemma}, $t(a,b)$ does not contain any point of $P$ in its interior. Thus, none of $top(a,b)$, $right(a,b)$, and $left(a,b)$ is completely inside the other triangles. Therefore, the only possible way that two triangles $t(a,b)$ and $t(p,q)$ can share a point is that one triangle, say $t(p,q)$, contains a corner of $t(a,b)$ in such a way that $a$ and $b$ are outside $t(p,q)$. In other words $t(a,b)$ intersects $t(p,q)$ through one of the sides $top(p,q)$, $right(p,q)$, or $left(p,q)$. If $t(a,b)$ intersects $t(p,q)$ through a direction $d\in \{top, right, left\}$ we say that $t(p,q)\prec_{d} t(a,b)$. 

By contradiction, suppose there is a point $c$ in the plane which is inside four triangles $\{t_1,t_2,t_3,t_4\}\subseteq T$. Out of these four, either (i) three of them are like $t_i\prec_d t_j \prec_d t_k$ or (ii) there is a triangle $t_l$ such that $t_l\prec_{top} t_i, t_l\prec_{right} t_j, t_l\prec_{left} t_k$, where $1\le i,j, k,l\le 4$ and $i\neq j \neq k \neq l$. Figure~\ref{configuration-fig} shows the two possible configurations (note that all other configurations obtained by changing the indices of triangles and/or the direction are symmetric to Figure~\ref{configuration-fig}(a) or Figure~\ref{configuration-fig}(b)).
\begin{figure}[htb]
  \centering
\setlength{\tabcolsep}{0in}
  $\begin{tabular}{cc}
 \multicolumn{1}{m{.5\columnwidth}}{\centering\includegraphics[width=.25\columnwidth]{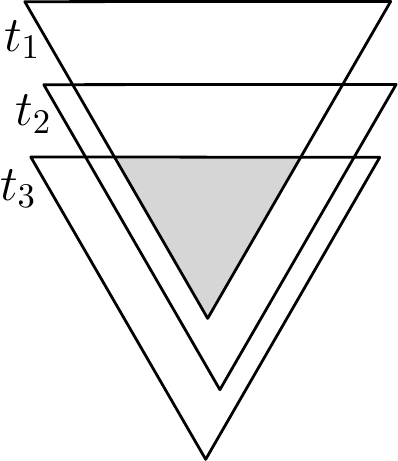}}
&\multicolumn{1}{m{.5\columnwidth}}{\centering\includegraphics[width=.35\columnwidth]{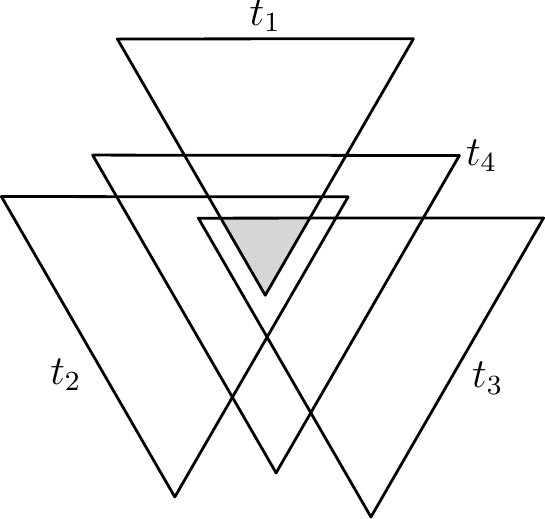}}\\
(a) & (b)
\end{tabular}$
  \caption{Two possible configurations: (a) $t_3 \prec_{top} t_2 \prec_{top} t_1$, (b) $t_4\prec_{top} t_1, t_4\prec_{left} t_2, t_4\prec_{right} t_3$.}
\label{configuration-fig}
\end{figure}

 Recall that each of $t_1,t_2,t_3,t_4$ corresponds to an edge in $\mathcal{T}$. In the configuration of Figure~\ref{configuration-fig}(a) consider $t_1$, $t_2$, and $top(t_3)$ which is shown in more detail in Figure~\ref{matching3-fig}(a). Suppose $t_1$ (resp. $t_2$) is defined by points $a$ and $b$ (resp. $p$ and $q$). By Lemma~\ref{empty-triangle-lemma}, $p$ and $q$ are above $top(t_3)$, $a$ and $b$ are above $top(t_2)$. By Lemma~\ref{triangle3}, $\max\{t(a,p),t(b,q)\} <\max\{t(a,b), t(p,q)\}$. This contradicts the fact that both of the edges representing $t(a,b)$ and $t(p,q)$ are in $\mathcal{T}$, because by replacing $\max\{t(a,b), t(p,q)\}$ with $t(a,p)$ or $t(b,q)$, we obtain a tree $\mathcal{T'}$ which is smaller than $\mathcal{T}$. In the configuration of Figure~\ref{configuration-fig}(b), consider all pairs of potential positions for two points defining $t_4$ which is shown in more detail in Figure~\ref{matching3-fig}(b). The pairs of potential positions on the boundary of $t_4$ are shown in red, green, and orange. Consider the red pair, and look at $t_2$, $t_4$, and $left(t_1)$. By Lemma~\ref{triangle3} and the same reasoning as for the previous configuration, we obtain a smaller tree $\mathcal{T'}$;  which contradicts the minimality of $\mathcal{T}$. By symmetry, the green and orange pairs lead to a contradiction.
Therefore, all configurations are invalid; which proves the lemma.

\begin{figure}[htb]
  \centering
\setlength{\tabcolsep}{0in}
  $\begin{tabular}{cc}
 \multicolumn{1}{m{.5\columnwidth}}{\centering\includegraphics[width=.35\columnwidth]{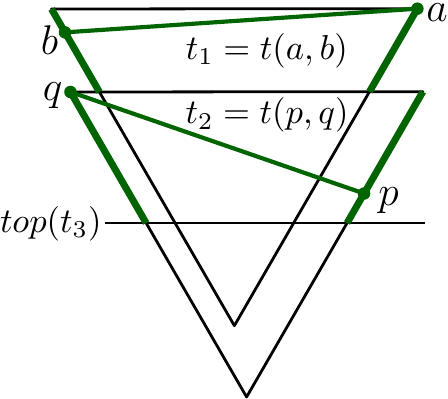}}
&\multicolumn{1}{m{.5\columnwidth}}{\centering\includegraphics[width=.4\columnwidth]{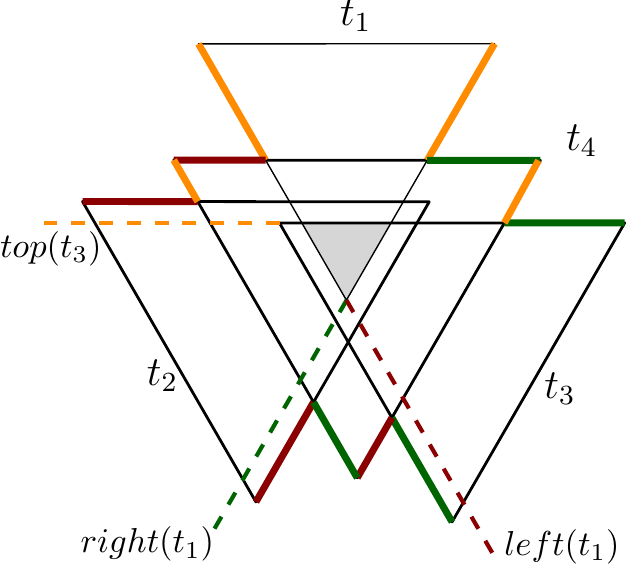}}\\
(a) & (b)
\end{tabular}$
  \caption{Illustration of Lemma~\ref{intersection-lemma}.}
\label{matching3-fig}
\end{figure}
\end{proof}

Our results in this section are based on Lemma~\ref{empty-triangle-lemma}, Lemma~\ref{intersection-lemma} and the two theorems by Tutte~\cite{Tutte1947} and Berge~\cite{Berge1958}. 

Now we prove that \kTD{2}{} has a perfect matching.

\begin{theorem}
 \label{mt-thr}
For a set $P$ of an even number of points, \kTD{2}{} has a perfect matching.
\end{theorem}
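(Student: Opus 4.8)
The plan is to apply Tutte's theorem (Theorem~\ref{Tutte}) by way of the Tutte--Berge deficiency bound, using the two preceding lemmas as the geometric engine. Fix an arbitrary subset $K\subseteq P$ and let $C_1,\dots,C_m$ be the connected components of $\text{\kTD{2}{}}-K$. I want to show $o(\text{\kTD{2}{}}-K)\le |K|$; in fact it suffices to bound the total number of components. The idea is to take the partition $\mathcal{P}$ to be exactly the vertex sets of these components (so $\mathcal{P}=\{C_1,\dots,C_m\}$, after also accounting for $K$), form the complete graph $G(\mathcal{P})$ with the smallest-triangle weights, and let $\mathcal{T}$ be a minimum spanning tree of $G(\mathcal{P})$ with triangle set $T$.

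The key structural point is that each edge $e=(P_i,P_j)$ of $\mathcal{T}$ connects two \emph{distinct} components of $\text{\kTD{2}{}}-K$, yet by Lemma~\ref{empty-triangle-lemma} its triangle $t(e)$ is empty of points of $P$ in its interior. If $t(e)$ contained at most two points of $P$ these would all have to lie in $K$ (otherwise the defining endpoints would be adjacent in \kTD{2}{}, contradicting that they lie in different components). So each tree edge must be ``charged'' to points of $K$ sitting inside its triangle: since $t(e)$ contains no points of $P$ at all in its interior by Lemma~\ref{empty-triangle-lemma}, I instead need to charge tree edges to $K$ via the intersection structure. Here is where Lemma~\ref{intersection-lemma} enters: each point of the plane — in particular each point of $K$ — lies in the interior of at most three triangles of $T$. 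The plan is a counting argument: the spanning tree $\mathcal{T}$ has $m-1$ edges, hence $|T|=m-1$ triangles. Each such triangle, being empty of $P\setminus K$ in its interior yet joining two different components, must contain at least one point of $K$ in its interior (a nonempty triangle joining distinct components with at most two interior $P$-points forces those points into $K$, and emptiness forces at least one such point to block the edge). Combined with the fact that each $K$-point is charged by at most three triangles, I get $m-1\le 3|K|$, whence $m\le 3|K|+1$.

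The main obstacle I anticipate is pinning down the charging inequality rigorously: it is not immediate that every tree triangle contains at least one point of $K$, because Lemma~\ref{empty-triangle-lemma} says the triangles are \emph{entirely} empty of $P$. The resolution is that the relevant triangle to charge is not $t(e)$ itself but the smallest triangle $t(a,b)$ witnessing non-adjacency of the chosen representatives $a\in P_i$, $b\in P_j$ in $\text{\kTD{2}{}}$; since $a$ and $b$ are in different components they are non-adjacent, so $t(a,b)$ must contain at least three points of $P\setminus\{a,b\}$ in its interior, and by the component structure these interior points all belong to $K$. I would therefore work with these ``witness'' triangles rather than the MST triangles when doing the charging, while still invoking Lemma~\ref{intersection-lemma}'s bound of three on the multiplicity — taking care that the $\ge 3$ points per triangle and $\le 3$ triangles per point combine correctly to yield a bound on $o(\text{\kTD{2}{}}-K)$. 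Once the inequality $o(\text{\kTD{2}{}}-K)\le|K|$ is established for every $K$, Theorem~\ref{Tutte} immediately gives the perfect matching, completing the proof.
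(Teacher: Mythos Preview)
Your approach is the same as the paper's, but there are two genuine gaps that would prevent the argument from closing.

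First, you have misread Lemma~\ref{empty-triangle-lemma}. The partition $\mathcal{P}$ you feed into the lemma is a partition of $P\setminus K$, not of $P$; so the lemma only guarantees that each MST triangle is empty of points of $P\setminus K$. It says nothing about points of $K$. Consequently there is no need for a separate family of ``witness triangles'': the MST triangle $t(e)$ \emph{is} $t(a,b)$ for the minimizing pair $a\in P_i$, $b\in P_j$, and since $a,b$ lie in different components of \kTD{2}{}$-K$ they are non-adjacent in \kTD{2}{}, so $t(a,b)$ contains at least three points of $P$ in its interior --- all of which must be in $K$ by Lemma~\ref{empty-triangle-lemma}. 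This gives $3(m-1)\le 3|K|$, i.e.\ $m\le |K|+1$. Your stated bound $m-1\le 3|K|$ (coming from ``at least one point of $K$'') is far too weak and would not yield the Tutte condition; you do later mention the correct ``$\ge 3$ per triangle'' count, but only as an aside, and you never combine it with Lemma~\ref{intersection-lemma} to get the sharp inequality.

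Second, even the correct bound $m\le |K|+1$ does not immediately give $o(\text{\kTD{2}{}}-K)\le |K|$: when $m=|K|+1$ you still need to rule out the possibility that all $|K|+1$ components are odd. This requires a short parity argument using that $|P|$ is even (if all components were odd, then $|P|=|K|+\sum_i|V(C_i)|$ would have the wrong parity). The paper carries out exactly this step; your proposal omits it entirely.
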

\begin{proof}
First we show that by removing a set $K$ of $k$ points from \kTD{2}{}, at most $k+1$ components are generated. Then we show that at least one of these components must be even. Finally by Theorem~\ref{Tutte} we conclude that \kTD{2}{} has a perfect matching.

Let $K$ be a set of $k$ vertices removed from \kTD{2}{}, and let $\mathcal{C}=\{C_1, \dots, C_{m(k)}\}$ be the resulting $m(k)$ components, where $m$ is a function depending on $k$. Actually $\mathcal{C}=\text{\kTD{2}{}}-K$ and $\mathcal{P}=\{V(C_1),\dots, V(C_{m(k)})\}$ is a partition of the vertices in $P\setminus K$. 

{\bf\em  Claim 1.} $m(k)\le k+1$. Let $G(\mathcal{P})$ be a complete graph with vertex set $\mathcal{P}$ which is constructed as described above. Let $\mathcal{T}$ be a minimum spanning tree of $G(\mathcal{P})$ and let $T$ be the set of triangles corresponding to the edges of $\mathcal{T}$. It is obvious that $\mathcal{T}$ contains $m(k)-1$ edges and hence $|T|=m(k)-1$. Let $F=\{(p,t):p\in K, t\in T, p\in t\}$ be the set of all (point, triangle) pairs where $p\in K$, $t\in T$, and $p$ is inside $t$. By Lemma~\ref{intersection-lemma} each point in $K$ can be inside at most three triangles in $T$. Thus, $|F|\le 3\cdot|K|$.
Now we show that each triangle in $T$ contains at least three points of $K$.  
Consider any triangle $\tau\in T$. Let $e=(V(C_i),V(C_j))$ be the edge of $\mathcal{T}$ which is corresponding to $\tau$, and let $a\in V(C_i)$ and $b\in V(C_j)$ be the points defining $\tau$. By Lemma~\ref{empty-triangle-lemma}, $\tau$ does not contain any point of $P\setminus K$ in its interior. Therefore, $\tau$ contains at least three points of $K$, because otherwise $(a,b)$ is an edge in \kTD{2}{} which contradicts the fact that $a$ and $b$ belong to different components in $\mathcal{C}$. Thus, each triangle in $T$ contains at least three points of $K$ in its interior. That is, $3\cdot|T|\le|F|$. Therefore, $3(m(k)-1)\le |F|\le 3k$, and hence $m(k)\le k+1$.

{\bf \em Claim 2}: $o(\mathcal{C})\le k$. By Claim 1, $|\mathcal{C}|=m(k)\le k+1$. If $|\mathcal{C}|\le k$, then $o(\mathcal{C})\le k$. Assume that $|\mathcal{C}|=k+1$. Since $P=K\cup \{\bigcup^{k+1}_{i=1}{V(C_i)}\}$, the total number of vertices of $P$ can be defined as $n=k+\sum_{i=1}^{k+1}{|V(C_i)|}$. Consider two cases where (i) $k$ is odd, (ii) $k$ is even. In both cases if all the components in $\mathcal{C}$ are odd, then $n$ is odd; contradicts our assumption that $P$ has an even number of vertices. Thus, $\mathcal{C}$ contains at least one even component, which implies that $o(\mathcal{C})\le k$.

Finally, by Claim 2 and Theorem~\ref{Tutte}, we conclude that \kTD{2}{} has a perfect matching.
\end{proof}

\begin{theorem}
\label{matching-1TD}
 For every set $P$ of points, \kTD{1}{} has a matching of size $\frac{2(n-1)}{5}$.
\end{theorem}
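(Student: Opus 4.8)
The plan is to mirror the argument for Theorem~\ref{mt-thr}, but to feed the resulting component bound into the Tutte--Berge formula (Theorem~\ref{Berge}) instead of Tutte's theorem, since \kTD{1}{} need not admit a perfect matching. Fix any $K\subseteq P$ with $|K|=k$ and write $G=\text{\kTD{1}{}}$. Let $\mathcal{C}=G-K$ be the set of $m(k)$ components and let $\mathcal{P}$ be the induced partition of $P\setminus K$. Exactly as before, I would form the complete graph $G(\mathcal{P})$, take a minimum spanning tree $\mathcal{T}$, and let $T$ be its $m(k)-1$ corresponding triangles.

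The only constant that changes in the counting step is the number of $K$-points forced into each triangle. By Lemma~\ref{empty-triangle-lemma} no triangle $\tau=t(a,b)\in T$ contains a point of $P\setminus K$; hence if $\tau$ held at most one point of $K$ it would contain at most one point of $P$ in total, so $(a,b)$ would be an edge of \kTD{1}{}, contradicting that $a$ and $b$ lie in different components. Therefore every triangle of $T$ contains at least \emph{two} points of $K$. Combining this with Lemma~\ref{intersection-lemma} (each point lies in at most three triangles of $T$) by double counting the incidence set $F=\{(p,t):p\in K,\ t\in T,\ p\in t\}$ gives $2(m(k)-1)\le|F|\le 3k$, that is $m(k)\le\tfrac{3k+2}{2}$.

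Next I would bound the deficiency $\mathrm{def}_G(K)=o(G-K)-k$ using two estimates on the number of odd components. First, $o(G-K)\le m(k)\le\tfrac{3k+2}{2}$, which yields $\mathrm{def}_G(K)\le\tfrac{k+2}{2}$. Second, since each odd component contains at least one vertex of $P\setminus K$, we have $o(G-K)\le n-k$, which yields $\mathrm{def}_G(K)\le n-2k$. Hence $\mathrm{def}_G(K)\le\min\bigl\{\tfrac{k+2}{2},\,n-2k\bigr\}$. As the first bound increases in $k$ while the second decreases, the continuous maximum of this minimum is attained at the crossing point $k=\tfrac{2(n-1)}{5}$, giving value $\tfrac{n+4}{5}$; since every integer $k$ gives a value no larger than this continuous optimum, $\mathrm{def}(G)\le\tfrac{n+4}{5}$.

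Finally, Theorem~\ref{Berge} gives a maximum matching of size $\tfrac12\bigl(n-\mathrm{def}(G)\bigr)\ge\tfrac12\bigl(n-\tfrac{n+4}{5}\bigr)=\tfrac{2(n-1)}{5}$, as claimed. The genuinely new idea — and the only substantive departure from Theorem~\ref{mt-thr} — is that the weaker ``two points per triangle'' packing bound (rather than three) no longer pins $m(k)$ down to $k+1$; it must instead be balanced against the trivial estimate $o(G-K)\le n-k$, and the fractional guarantee is then extracted through Tutte--Berge. I expect the only place requiring care to be the optimization over $k$, specifically confirming that the integer-valued maximum of $\min\{\tfrac{k+2}{2},\,n-2k\}$ is still dominated by the continuous optimum $\tfrac{n+4}{5}$; the triangle-packing step itself is immediate from the two lemmas.
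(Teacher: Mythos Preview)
Your proposal is correct and follows essentially the same argument as the paper: the same minimum-spanning-tree/triangle setup, the same double count yielding $m(k)\le\tfrac{3k}{2}+1$, the same trivial bound $m(k)\le n-k$ (you phrase it as $o(G-K)\le n-k$, which is equivalent here), and the same optimization over $k$ plugged into Tutte--Berge to obtain $\mathrm{def}(G)\le\tfrac{n+4}{5}$ and hence a matching of size at least $\tfrac{2(n-1)}{5}$. Your remark about checking that the integer maximum does not exceed the continuous one is a point of care the paper leaves implicit.
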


\begin{proof}
Let $K$ be a set of $k$ vertices removed from \kTD{1}{}, and let $\mathcal{C}=\{C_1, \dots, C_{m(k)}\}$ be the resulting $m(k)$ components. Actually $\mathcal{C}=\text{\kTD{1}{}}-K$ and $\mathcal{P}=\{V(C_1),\dots, V(C_{m(k)})\}$ is a partition of the vertices in $P\setminus K$. Note that $o(\mathcal{C})\le m(k)$.
Let $M^*$ be a maximum matching in \kTD{1}{}. By Theorem~\ref{Berge}, 

\begin{align}
\label{align0}
|M^*|&= \frac{1}{2}(n-\text{def}(\text{\kTD{1}{}})),
\end{align}

where

\begin{align}
\label{align1}
\text{def}(\text{\kTD{1}{}})&= \max\limits_{K\subseteq P}(o(\mathcal{C})-|K|)\nonumber\\
& \le \max\limits_{K\subseteq P}(|\mathcal{C}|-|K|)\nonumber\\
& = \max\limits_{0\le k\le n}(m(k)-k).
\end{align}
Define $G(\mathcal{P})$, $\mathcal{T}$, $T$, and $F$ as in the proof of Theorem~\ref{mt-thr}. By Lemma~\ref{intersection-lemma}, $|F|\le 3\cdot|K|$.
By the same reasoning as in the proof of Theorem~\ref{mt-thr}, each triangle in $T$ has at least two points of $K$ in its interior. Thus, $2\cdot|T|\le|F|$. Therefore, $2(m(k)-1)\le |F| \le 3k$, and hence

\begin{equation}
\label{ineq1}
 m(k)\le\frac{3k}{2}+1.
\end{equation} 

In addition, $k+m(k)=|K|+|\mathcal{C}|\le |P|=n$, and hence

\begin{equation}
\label{ineq2}               
m(k)\le n-k.
\end{equation}

By Inequalities~(\ref{ineq1}) and ~(\ref{ineq2}), 

\begin{equation}
\label{ineq3}               
m(k)\le \min\{\frac{3k}{2}+1, n-k\}.
\end{equation}

Thus, by (\ref{align1}) and (\ref{ineq3})

\begin{align}
\label{align2}
\text{def}(\text{\kTD{1}{}})&\le \max\limits_{0\le k\le n}(m(k)-k)\nonumber\\
&\le \max\limits_{0\le k\le n}\{\min\{\frac{3k}{2}+1, n-k\}-k\}\nonumber\\
&= \max\limits_{0\le k\le n}\{\min\{\frac{k}{2}+1, n-2k\}\}\nonumber\\
&= \frac{n+4}{5},
\end{align}

where the last equation is achieved by setting $\frac{k}{2}+1$ equal to $n-2k$, which implies $k=\frac{2(n-1)}{5}$. Finally by substituting (\ref{align2}) in Equation (\ref{align0}) we have
$$
|M^*|\ge \frac{2(n-1)}{5}.
$$
\end{proof}
\section{Blocking TD-Delaunay graphs}
\label{blocking-section}
In this section we consider the problem of blocking TD-Delaunay graphs. Let $P$ be a set of $n$ points in the plane such that no pair of points of $P$ is collinear in the $l^{0}$, $l^{60}$, and $l^{120}$ directions. Recall that a point set $K$ blocks \kTD{k}{$(P)$} if in \kTD{k}{$(P\cup K)$} there is no edge connecting two points in $P$. That is, $P$ is an independent set in \kTD{k}{$(P\cup K)$}.

\begin{theorem}
\label{blocking-thr1}
  At least $\lceil\frac{(k+1)(n-1)}{3}\rceil$ points are necessary to block \kTD{k}{$(P)$}.
\end{theorem}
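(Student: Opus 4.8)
The plan is to establish a lower bound on the number of blocking points by relating it to the structure of \kTD{k}{} edges that must be destroyed. I would argue that if $K$ blocks \kTD{k}{$(P)$}, then for every edge $(p,q)$ that would appear in \kTD{k}{$(P)$} on the original point set, adding $K$ must push enough points into $t(p,q)$ to make its interior contain more than $k$ points. Specifically, since \kTD{k}{$(P)$} has an edge $(p,q)$ exactly when $t(p,q)$ contains at most $k$ points of $P$, blocking requires that after adding $K$, the interior of the relevant smallest triangle between any two points of $P$ contains at least $k+1$ points of $K$.

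The core idea is a counting argument. First I would consider the graph \kTD{k}{$(P)$} itself (before blocking) and identify a large collection of edges, each of which must be ``killed'' by the blocking set $K$. The natural candidate is a spanning structure: since \kTD{0}{$(P)$} is connected (as noted in the excerpt via \cite{Babu2013}), it contains a spanning tree with $n-1$ edges, and each such edge $(p,q)$ is an empty-triangle edge, meaning $t(p,q)$ initially contains no points of $P$ in its interior. To block each of these $n-1$ edges, the interior of $t(p,q)$ must receive at least $k+1$ points of $K$. This gives a total demand of $(k+1)(n-1)$ incidences between points of $K$ and these triangles.

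The second ingredient, which I expect to be the crux, is an upper bound on how many of these triangles a single point of $K$ can lie inside. Using an argument analogous to Lemma~\ref{intersection-lemma}, one shows that the triangles corresponding to the edges of a minimum spanning tree (under the triangle-area metric) are nearly disjoint: each point in the plane lies in the interior of at most three such triangles. Combining the lower bound of $(k+1)(n-1)$ on the total number of required (point, triangle) incidences with the upper bound of $3$ incidences per point of $K$ yields $3|K| \ge (k+1)(n-1)$, hence $|K| \ge \lceil\frac{(k+1)(n-1)}{3}\rceil$.

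The main obstacle will be making the first step rigorous: I must choose the spanning tree carefully so that its edges are genuinely empty-triangle edges (so that each needs a full $k+1$ points to block it) while simultaneously ensuring the Lemma~\ref{intersection-lemma}-style ``at most three triangles per point'' bound applies. The cleanest route is to take a minimum spanning tree $\mathcal{T}$ of the complete graph on $P$ weighted by triangle area, whose $n-1$ triangles are empty (by the reasoning of Lemma~\ref{empty-triangle-lemma} with singleton parts) and satisfy the three-overlap property (by Lemma~\ref{intersection-lemma}); then each of these triangles must contain at least $k+1$ points of $K$ to block its tree edge, and the counting closes as above.
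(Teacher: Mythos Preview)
Your proposal is correct and follows essentially the same route as the paper: take the minimum spanning tree of the complete graph on $P$ under the triangle-area weight, invoke Lemma~\ref{empty-triangle-lemma} (with the singleton partition) to get that its $n-1$ triangles are empty so each needs $k+1$ blockers, invoke Lemma~\ref{intersection-lemma} to bound overlaps by three, and finish with the double count $3|K|\ge (k+1)(n-1)$. The paper's proof is exactly this argument, with no additional ingredients.
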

\begin{proof}
Let $K$ be a set of $m$ points which blocks \kTD{k}{$(P)$}. Let $G(\mathcal{P})$ be a complete graph with vertex set $\mathcal{P}=P$. Let $\mathcal{T}$ be a minimum spanning tree of $G(\mathcal{P})$ and let $T$ be the set of triangles corresponding to the edges of $\mathcal{T}$. It is obvious that $|T|=n-1$. By Lemma~\ref{empty-triangle-lemma} the triangles in $T$ are empty, thus, the edges of $\mathcal{T}$ belong to any \kTD{k}{$(P)$} where $k\ge0$. To block each edge, corresponding to a triangle in $T$, at least $k+1$ points are necessary. By Lemma~\ref{intersection-lemma} each point in $K$ can lie in at most three triangles of $T$. Therefore, $m\ge\lceil\frac{(k+1)(n-1)}{3}\rceil$, which implies that at least $\lceil\frac{(k+1)(n-1)}{3}\rceil$ points are necessary to block all the edges of $\mathcal{T}$ and hence \kTD{k}{$(P)$}.
\end{proof}
\begin{figure}[htb]
  \centering
\setlength{\tabcolsep}{0in}
  $\begin{tabular}{cc}
 \multicolumn{1}{m{.5\columnwidth}}{\centering\includegraphics[width=.45\columnwidth]{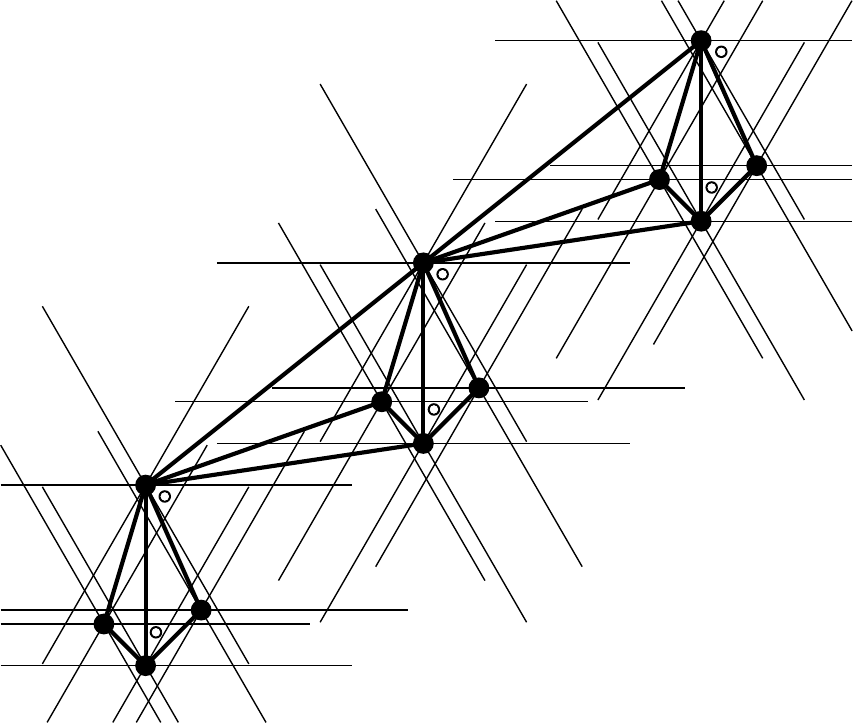}}
&\multicolumn{1}{m{.5\columnwidth}}{\centering\includegraphics[width=.4\columnwidth]{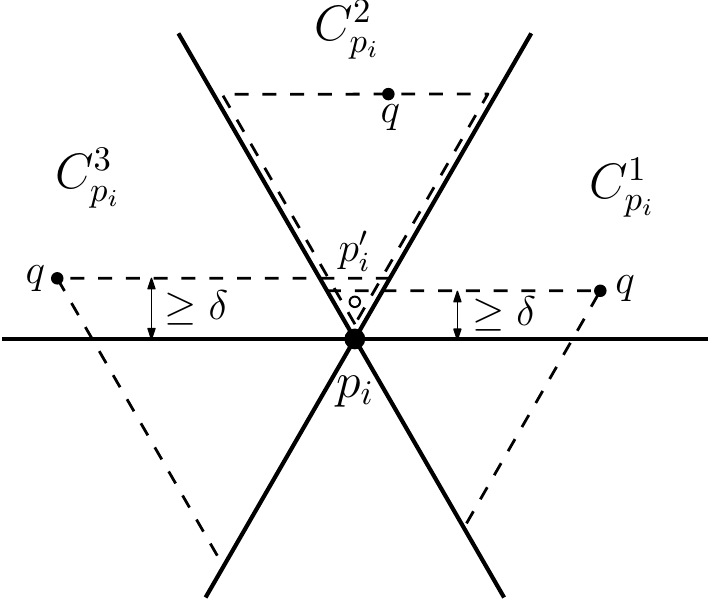}}\\
(a) & (b)
\end{tabular}$
  \caption{(a) a \kTD{0}{} graph which is shown in bold edges is blocked by $\lceil\frac{n-1}{2}\rceil$ white points, (b) $p'_i$ blocks all the edges connecting $p_i$ to the vertices above $l^0_{p_i}$.}
\label{blocking-fig}
\end{figure}

Theorem~\ref{blocking-thr1} gives a lower bound on the number of points that are necessary to block a TD-Delaunay graph. By this theorem, at least $\lceil\frac{n-1}{3}\rceil$, $\lceil\frac{2(n-1)}{3}\rceil$, $n-1$ points are necessary to block $0\text{-}$, $1\text{-}$, \kTD{2}{$(P)$} respectively. Now we introduce another formula which gives a better lower bound for \kTD{0}{}. For a point set $P$, let $\nu_k(P)$ and $\alpha_k(P)$ respectively denote the size of a maximum matching and a maximum independent set in \kTD{k}{$(P)$}. For every edge in the maximum matching, at most one of its endpoints can be in the maximum independent set. Thus,
\begin{equation}
\label{matching-independent}
 \alpha_k(P)\le |P| - \nu_k(P).
\end{equation}
Let $K$ be a set of $m$ points which blocks \kTD{k}{$(P)$}. By definition there is no edge between points of $P$ in \kTD{k}{$(P\cup K)$}. That is, $P$ is an independent set in \kTD{k}{$(P\cup K)$}. Thus, 
\begin{equation}
\label{ineq4}
 n\le \alpha_k(P\cup K).
\end{equation}
By (\ref{matching-independent}) and (\ref{ineq4}) we have
\begin{equation}
\label{ineq5}
 n\le \alpha_k(P\cup K)\le (n+m)-\nu_k(P\cup K).
\end{equation}
\begin{theorem}
\label{blocking-0TD-thr}
  At least $\lceil\frac{n-1}{2}\rceil$ points are necessary to block \kTD{0}{$(P)$}.
\end{theorem}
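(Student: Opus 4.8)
The plan is to specialize the general inequality (\ref{ineq5}) to the case $k=0$ and then combine it with the known lower bound on the size of a maximum matching in a TD-Delaunay graph. First I would set $k=0$ in (\ref{ineq5}), which reads $n \le \alpha_0(P\cup K) \le (n+m) - \nu_0(P\cup K)$. Dropping the middle term and rearranging the outer inequality gives immediately
$$\nu_0(P\cup K) \le m.$$
So an upper bound on the number of blocking points $m$ is controlled from below by the matching number of the augmented point set.

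Next I would invoke the result of Babu et al.~\cite{Babu2013}, already recalled in the introduction, which states that the TD-Delaunay graph (i.e.\ \kTD{0}{}) of any set of $N$ points has a maximum matching of size at least $\lceil\frac{N-1}{3}\rceil$. The key move is to apply this bound not to $P$ itself but to the augmented set $P\cup K$, which contains $n+m$ points and whose TD-Delaunay graph is exactly \kTD{0}{$(P\cup K)$}. This yields
$$\nu_0(P\cup K) \ge \left\lceil\frac{n+m-1}{3}\right\rceil.$$

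Finally I would chain the two bounds to get $\lceil\frac{n+m-1}{3}\rceil \le m$. Since $m$ is an integer, this is equivalent to $\frac{n+m-1}{3}\le m$, i.e.\ $n-1 \le 2m$, and hence $m \ge \lceil\frac{n-1}{2}\rceil$, which is the desired bound. The proof is short because most of the work is already packaged in the inequalities preceding the statement; the main (and essentially the only) subtlety is recognizing that the cited matching lower bound must be instantiated on the combined set $P\cup K$ rather than on $P$, so that the $+m$ appears inside the ceiling and survives the arithmetic. One should also note that the general-position assumption stated at the start of the section is needed so that the matching bound of \cite{Babu2013} applies to $P\cup K$.
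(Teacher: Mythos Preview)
Your proof is correct and follows essentially the same approach as the paper: specialize inequality~(\ref{ineq5}) to $k=0$, plug in the $\lceil\frac{N-1}{3}\rceil$ matching lower bound from~\cite{Babu2013} applied to $P\cup K$, and solve for $m$. The only cosmetic difference is that you first rearrange to $\nu_0(P\cup K)\le m$ before substituting, whereas the paper substitutes directly into the chain $n\le(n+m)-\nu_0(P\cup K)$; the arithmetic and the key idea (applying the matching bound to the augmented set) are identical.
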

\begin{proof}
Let $K$ be a set of $m$ points which blocks \kTD{k}{$(P)$}. Consider \kTD{0}{$(P\cup K)$}. It is known that the $\nu_0(P\cup K) \ge\lceil\frac{n+m-1}{3}\rceil$; see~\cite{Babu2013}. By Inequality~(\ref{ineq5}), $$n\le (n+m)-\lceil\frac{n+m-1}{3}\rceil\le \frac{2(n+m)+1}{3},$$ and consequently $m\ge \lceil\frac{n-1}{2}\rceil$ (note that $m$ is an integer number).
\end{proof}

Figure~\ref{blocking-fig}(a) shows a \kTD{0}{} graph on a set of 12 points which is blocked by 6 points. By removing the topmost point we obtain a set with odd number of points which can be blocked by 5 points. Thus, the lower bound provided by Theorem~\ref{blocking-0TD-thr} is tight. 

Now let $k=1$. By Theorem~\ref{matching-1TD} we have $\nu_1(P\cup K)\ge \frac{2((n+m)-1)}{5}$, and by Inequality~(\ref{ineq5}) $$n\le (n+m)-\frac{2((n+m)-1)}{5}=\frac{3(n+m)+2}{5},$$ and consequently $m\ge \lceil\frac{2(n-1)}{3}\rceil$; the same lower bound as in Theorem~\ref{blocking-thr1}. 

Now let $k=2$. By Theorem~\ref{mt-thr} we have $\nu_2(P\cup K)= \lfloor\frac{n+m}{2}\rfloor$ (note that $n+m$ may be odd). By Inequality~(\ref{ineq5}) $$n\le (n+m)-\lfloor\frac{n+m}{2}\rfloor=\lceil\frac{n+m}{2}\rceil,$$ and consequently $m\ge n$, where $n+m$ is even, and $m\ge n-1$, where $n+m$ is odd.  

\begin{theorem}
\label{blocking-thr2}
 There exists a set $K$ of $n-1$ points that blocks \kTD{0}{$(P)$}.
\end{theorem}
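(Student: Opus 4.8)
The plan is to block every edge of \kTD{0}{$(P)$} ``from below.'' Since no two points of $P$ are collinear in the $l^0$ direction, all points have distinct $y$-coordinates, so each potential edge $(p_i,p_j)$ has a well-defined lower endpoint; I will place one blocking point just above every point that can serve as a lower endpoint. Concretely, let $p_{top}$ be the highest point of $P$, and for every other point $p_i$ place one point $p'_i$ slightly above $p_i$, chosen so that $p'_i$ lies in the interior of $t(p_i,q)$ for \emph{every} $q\in P$ with $y_q>y_{p_i}$. Then $K=\{p'_i:p_i\neq p_{top}\}$ has $|K|=n-1$, matching the claimed bound.

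The geometric heart of the argument is that a point placed directly above $p_i$ enters the interior of $t(p_i,q)$ whenever $q$ lies above $p_i$. First, since $q$ is on the boundary of $t(p_i,q)$ with $y_q>y_{p_i}$, the point $p_i$ cannot lie on the top side of $t(p_i,q)$: that side is the highest (horizontal) edge of the downward triangle, so $p_i$ on it would force $y_q\le y_{p_i}$. By Observation~\ref{side-point-obs} each side of $t(p_i,q)$ contains $p_i$ or $q$, hence $p_i$ must lie on one of the two slanted sides, or at the bottom apex where they meet. The two slanted sides of a downward homothet of $\ConvexShape$ have inward unit normals $(\tfrac{\sqrt3}{2},\tfrac12)$ and $(-\tfrac{\sqrt3}{2},\tfrac12)$, whose vertical components are both positive; therefore the upward direction $(0,1)$ has strictly positive inner product with each of them, and the upward ray from any boundary point on a slanted side (and from the apex, which is the intersection of the two slanted half-planes) immediately enters the interior. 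Consequently there is some $\epsilon_i(q)>0$ with $p_i+(0,\epsilon)\in\mathrm{int}(t(p_i,q))$ for all $0<\epsilon<\epsilon_i(q)$.

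Because $P$ is finite, the set $Q_i=\{q\in P:y_q>y_{p_i}\}$ is finite, so I set $\epsilon_i=\tfrac12\min_{q\in Q_i}\epsilon_i(q)>0$ and put $p'_i=p_i+(0,\epsilon_i)$; by construction $p'_i\in\mathrm{int}(t(p_i,q))$ for every $q\in Q_i$. Finally, for any edge $(p_a,p_b)$ of \kTD{0}{$(P)$}, label the endpoints so that $p_b\in Q_a$ (possible since their $y$-coordinates differ). Then $p_a\neq p_{top}$, the blocking point $p'_a\in K$ lies in $\mathrm{int}(t(p_a,p_b))$, so $t(p_a,p_b)$ is non-empty in $P\cup K$, and hence $(p_a,p_b)$ is not an edge of \kTD{0}{$(P\cup K)$}. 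Thus $P$ is independent in \kTD{0}{$(P\cup K)$}, i.e.\ $K$ blocks \kTD{0}{$(P)$}.

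I expect the only real obstacle to be the normal-direction claim, namely verifying that the upward vertical direction has strictly positive inner product with the inward normal of each slanted side of $\ConvexShape$, so that it enters the interior from any slanted-side point and from the apex. Once this is in hand, distinctness of the $y$-coordinates and finiteness of $P$ make the choice of the $\epsilon_i$ routine, and the blocking property is immediate: the emptiness test for the fixed triangle $t(p_a,p_b)$ depends only on $t(p_a,p_b)$ itself and is unaffected by the presence of the other blocking points.
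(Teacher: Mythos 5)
Your proof is correct and follows essentially the same construction as the paper: place one blocking point vertically just above each of the $n-1$ non-topmost points, so that it lies in the interior of $t(p_i,q)$ for every $q$ above $p_i$, and observe that every edge is blocked by the blocker of its lower endpoint. The only cosmetic difference is that the paper picks a single global $\epsilon$ smaller than the minimum vertical gap between points, whereas you choose a per-point $\epsilon_i$; both work, and your normal-vector justification fills in the step the paper labels ``easy to see.''
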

\begin{proof}
Let $d^0(p,q)$ be the Euclidean distance between $l^0_p$ and $l^0_q$. Let $\delta = \min\{d^0(p,q): p,q\in P\}$.
 For each point $p\in P$ let $p(x)$ and $p(y)$ respectively denote the $x$ and $y$ coordinates of $p$ in the plane. Let $p_1, \dots, p_n$ be the points of $P$ in the increasing order of their $y$-coordinate. Let $K=\{p'_i: p'_i(x)=p_i(x), p'_i(y)=p_i(y)+\epsilon, \epsilon<\delta, 1\le i\le n-1\}$. See Figure~\ref{blocking-fig}(b). For each point $p_i$, let $E_{p_i}$ (resp. $\overline{E_{p_i}}$) denote the edges of \kTD{0}{$(P)$} between $p_i$ and the points above $l^0_{p_i}$ (resp. below $l^0_{p_i}$). It is easy to see that the downward triangle between $p_i$ and any point $q$ above $l^0_{p_i}$ (i.e. any point $q\in C^1_{p_i}\cup C^2_{p_i}\cup C^3_{p_i}$) contains $p'_i$. Thus, $p'_i$ blocks all the edges in $E_{p_i}$. In addition, the edges in $\overline{E_{p_i}}$ are blocked by $p'_1, \dots, p'_{i-1}$. Therefore, all the edges of \kTD{0}{$(P)$} are blocked by the $n-1$ points in $K$.
\end{proof}
Note that the bound of Theorem~\ref{blocking-thr2} is tight, because \kTD{0}{$(P)$} can be a path representing $n-1$ disjoint triangles and for each triangle we need at least one point to block its corresponding edge. 
We can extend the result of Theorem~\ref{blocking-thr2} to \kTD{k}{$(P)$} where $k\ge 1$. For each point $p_i$ we put $k+1$ copies of $p'_i$ very close to $p_i$. Thus, 

\begin{corollary}
 There exists a set $K$ of $(k+1)(n-1)$ points that blocks \kTD{k}{$(P)$}.
\end{corollary}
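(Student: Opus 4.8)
The plan is to reuse the blocking construction of Theorem~\ref{blocking-thr2} almost verbatim, replacing each single blocker by a tight cluster of $k+1$ blockers. Recall that in \kTD{k}{$(P\cup K)$} an edge between two points $p,q\in P$ is present exactly when the interior of $t(p,q)$ contains at most $k$ points of $P\cup K$; hence to block this edge it suffices to force at least $k+1$ points of $K$ into the interior of $t(p,q)$. The proof of Theorem~\ref{blocking-thr2} already exhibits, for each $p_i$ with $1\le i\le n-1$, a point $p'_i$ lying in the open interior of $t(p_i,q)$ for every $q$ above $l^0_{p_i}$; the only new ingredient is to duplicate this blocker $k+1$ times.

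Concretely, I would order $P$ as $p_1,\dots,p_n$ by increasing $y$-coordinate and define $\delta$ as in Theorem~\ref{blocking-thr2}. For each $i\in\{1,\dots,n-1\}$, instead of the single point $p'_i$ I place $k+1$ points $p'_{i,1},\dots,p'_{i,k+1}$ in a sufficiently small disk around the location of $p'_i$ (directly above $p_i$ at height less than $\delta$). The radius of this disk is chosen small enough that all $k+1$ copies still lie in the open interior of $t(p_i,q)$ for every $q$ above $l^0_{p_i}$; such a radius exists because $p'_i$ lies in the open interior of each of these finitely many triangles, so their intersection contains an open ball about $p'_i$. I would additionally perturb the copies within this disk so that the general-position assumption (no two points collinear in the $l^0$, $l^{60}$, $l^{120}$ directions) is preserved.

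For correctness I would argue exactly as in Theorem~\ref{blocking-thr2}, upgrading ``contains $p'_i$'' to ``contains $p'_{i,1},\dots,p'_{i,k+1}$''. Take any pair $p_i,p_j\in P$ and assume w.l.o.g.\ that $p_i$ lies below $p_j$, so $i\le n-1$; then $p_j$ is above $l^0_{p_i}$, and by construction the $k+1$ points $p'_{i,1},\dots,p'_{i,k+1}$ all lie in the interior of $t(p_i,p_j)$. Hence the interior of $t(p_i,p_j)$ contains at least $k+1$ points of $P\cup K$, so $(p_i,p_j)$ is not an edge of \kTD{k}{$(P\cup K)$}. Since the pair was arbitrary, $P$ is independent in \kTD{k}{$(P\cup K)$}, i.e.\ $K$ blocks \kTD{k}{$(P)$}. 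The set $K$ consists of $k+1$ points for each of the $n-1$ indices, giving $|K|=(k+1)(n-1)$ as claimed.

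The only point requiring care, and the step I expect to be the main obstacle, is verifying that a single cluster radius works simultaneously for all the triangles $t(p_i,q)$ as $q$ ranges over the points above $l^0_{p_i}$, together with the need to keep the copies in general position. Both are handled by the finiteness of $P$: the open-interior containment established in Theorem~\ref{blocking-thr2} is robust under small perturbation, so shrinking the cluster enough (and, if necessary, perturbing generically within it) makes every copy a valid blocker without introducing degeneracies.
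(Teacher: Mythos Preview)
Your proposal is correct and follows exactly the paper's approach: the paper proves the corollary in one sentence by saying that for each $p_i$ one places $k+1$ copies of the blocker $p'_i$ from Theorem~\ref{blocking-thr2} very close to $p_i$. Your write-up simply spells out the perturbation and general-position details that the paper leaves implicit.
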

\section{Conclusion}
\label{conclusion}
In this paper, we considered some combinatorial properties of higher-order triangular-distance Delaunay graphs of a point set $P$. We proved that
\begin{itemize}

  \item \kTD{k}{} is $(k+1)$ connected.
  \item \kTD{2}{} contains a bottleneck biconnected spanning graph of $P$.
  \item \kTD{7}{} contains a bottleneck Hamiltonian cycle and \kTD{5}{} may not have any.
  \item \kTD{6}{} contains a bottleneck perfect matching and \kTD{5}{} may not have any.
  \item \kTD{1}{} has a matching of size at least $\frac{2(n-1)}{5}$.
  \item \kTD{2}{} has a perfect matching when $P$ has an even number of points.
  \item $\lceil\frac{n-1}{2}\rceil$ points are necessary to block \kTD{0}{}.
  \item $\lceil\frac{(k+1)(n-1)}{3}\rceil$ points are necessary and $(k+1)(n-1)$ points are sufficient to block \kTD{k}{}.
\end{itemize}

We leave a number of open problems:
\begin{itemize}

  \item What is a tight lower bound for the size of maximum matching in \kTD{1}{}?
  \item Does \kTD{6}{} contain a bottleneck Hamiltonian cycle?
 \item As shown in Figure~\ref{TD}(a) \kTD{0}{} may not have a Hamiltonian cycle. For which values of $k=1,\dots, 6$, is the graph \kTD{k}{} Hamiltonian?
\end{itemize}

\bibliographystyle{abbrv}
\bibliography{Higher-Order-TDDEL.bib}
\end{document}